\theoremstyle{definition}
	\newtheorem{definition}{Definition}
    \newtheorem{example}{Example}
	\theoremstyle{plain}
	\newtheorem{theorem}{Theorem}
	\newtheorem{prop}{Proposition}
 \newcommand{\rowvec}[1]{\ensuremath{\begin{pmatrix}#1\end{pmatrix}}}
\begin{document}
\def\definitionautorefname{Definition}
\def\sectionautorefname{Section}
\def\subsectionautorefname{Subsection}
\def\subsubsectionautorefname{Subsubsection}
\def\lemmaautorefname{Lemma}
\def\propautorefname{Proposition}
\def\theoremautorefname{Theorem}
\def\remarkautorefname{Remark}

\title{Spatial Super-Infection and Co-Infection Dynamics in Networks}

\author{Alyssa Yu$^{a}$, Laura P. Schaposnik$^{b}$} 

\begin{abstract}
Understanding interactions between the spread of multiple pathogens during an epidemic is crucial to assessing the severity of infections in human, animal, and plant communities. In this paper, we introduce two new {\it Multiplex Bi-Virus Reaction-Diffusion models (MBRD)} on multiplex metapopulation networks: the super-infection model {\it(MBRD-SI)} and the co-infection model {\it(MBRD-CI)}. These frameworks capture two-pathogen dynamics with spatial diffusion and cross-diffusion, allowing the prediction of infection clustering and large-scale spatial distributions. We establish conditions for Turing and Turing-Hopf instabilities in both models and provide experimental evidence of epidemic pattern formation. Beyond epidemiology, we discuss applications of the {\it MBRD} framework to information propagation, malware diffusion, election forecasting, and urban transportation networks.
\end{abstract}

\keywords{Epidemic models, reaction-diffusion, Turing patterns, multiplex networks, super-infection, co-infection, bi-virus model, two-strain model}
\maketitle
 
\section{Introduction}

Mathematical models for epidemiology have been crucial to understanding the spread of infections, from Ebola~\cite{kiskowski2016modeling} to malaria~\cite{koella2003epidemiological}. During the COVID-19 pandemic, mathematical models informed policy decisions, including issued public health emergencies, lockdowns, and mask mandates worldwide~\cite{editorial2021epidemiology}. To combat the 2024 measles outbreak in Chicago, Illinois, the Center for Disease Control used a compartmental dynamic model to predict new cases and inform an early response which included mass vaccinations~\cite{masters2024real}.

The field of epidemiology originates from Hippocrates in ancient Greece~\cite{pappas2008insights}, and has evolved significantly since. The first mathematical model for epidemiology was developed by Bernoulli~\cite{dietz2002daniel} to study smallpox spread. Later, in 1927, Kermack and McKendrick introduced the compartmental SIR model~\cite{kermack1927contribution}, in which individuals are separated into the Susceptible, Infected, and Recovery populations. 

Most epidemic models can be categorized as either stochastic or deterministic. There are a number of approaches to stochastic modeling, including Markov chains~\cite{gracy2025modeling}, cellular automata~\cite{sirakoulis2000cellular}, stochastic differential equations~\cite{gray2011stochastic}, branching processes~\cite{mitrofani2021branching}, and percolation~\cite{luo2020minimal}. While most deterministic models are compartmental, modifications can be made to structure them based on factors such as age~\cite{ram2021modified} and risk~\cite{akande2024risk}. Our study is based on the classic SIS model~\cite{hethcote1976qualitative}, in which individuals are compartmentalized into the Susceptible and Infected populations, and individuals become susceptible once again after recovery without lasting immunity.

There are a myriad of studies dedicated to understanding the spread of a single infectious disease. In this paper, we extend the typical SIS framework in the following two ways. 
\begin{itemize}
\item We extend classic SIS models to two-pathogen models, formalized as the {\bf Multiplex Bi-Virus Reaction-Diffusion framework (MBRD)}. Within this, we define the super-infection model  {\bf (MBRD-SI)} and the co-infection model  {\bf (MBRD-CI)}. While we refer to the infecting agents as ``viruses" in this paper, these extensions can also capture the dynamics of interactions between viral strains while they spread across populations.

\item We consider the spatial distribution of infections across a network of populations, which can represent towns, cities, or countries, depending on the spatial scale chosen. To do this, we integrate multiplex networks into our model so varying levels of movement between populations are accounted for.
\end{itemize}

The spread of infectious diseases can be characterized by diffusion processes~\cite{gao2022spatiotemporal, thakar2020unfolding}, and reaction-diffusion equations have been used to model the epidemic spread of a single pathogen~\cite{duan2019turing,zhao2025navigating}. Reaction-diffusion dynamics can often simulate the clustering of infections that occurs between physical communities. For example, we observe that clustering occurs in both Figure~\ref{fig:russia} and Figure~\ref{fig:duanws}, showing that modeling with reaction-diffusion systems may explain some aspects of infection spread in the physical world. In this paper, reaction-diffusion mechanics allow us to describe the spatial distribution of infections in two-pathogen models by treating the susceptible state and each infected state as different morphogens.

\begin{figure}[htbp]
    \centering
    \includegraphics[width=0.38\textwidth]{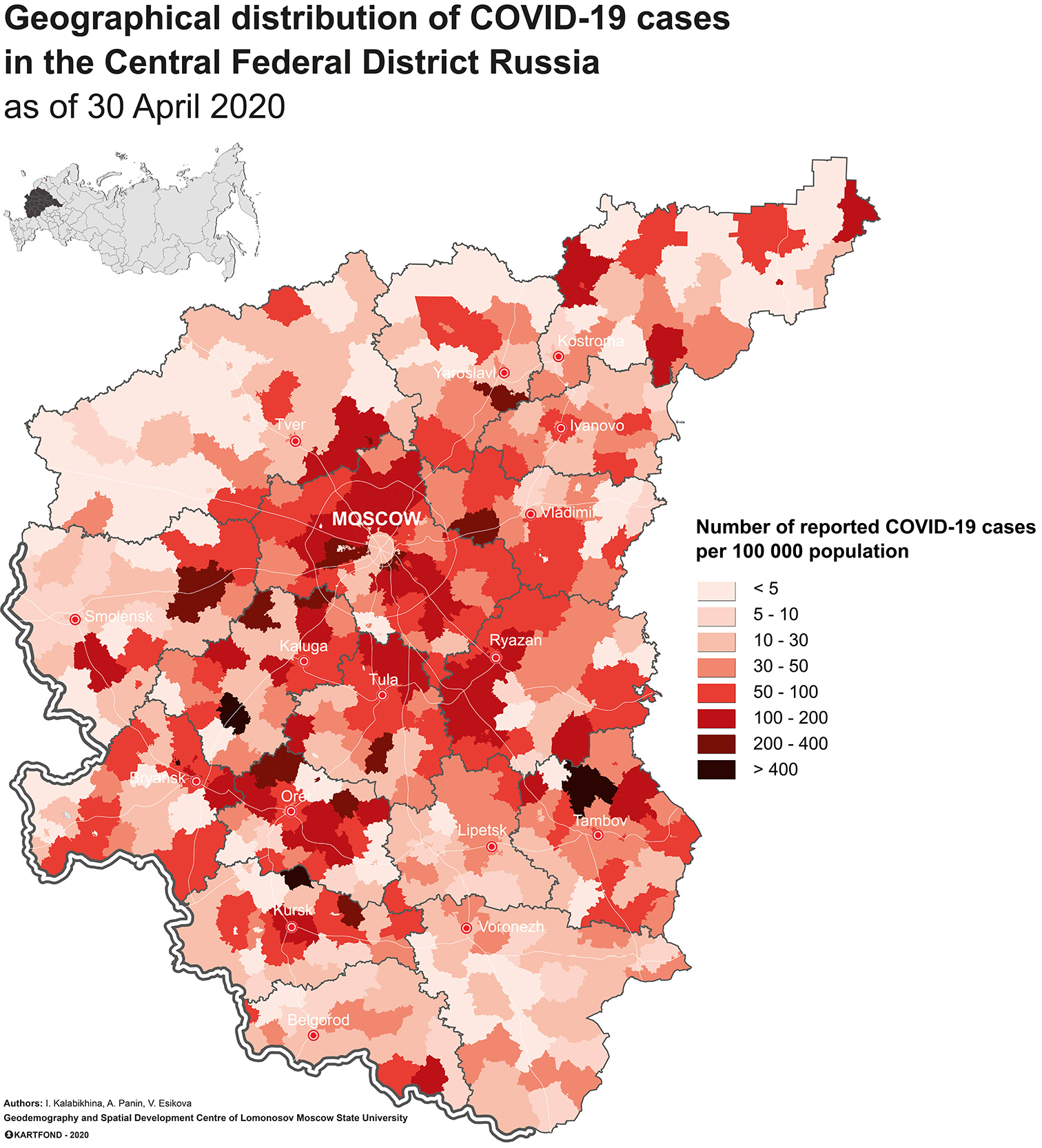}
    \caption{Distribution of COVID-19 infections in the Central Federal District of Russia, from~\cite{kalabikhina2020spatial}.}
    \label{fig:russia}
\end{figure}

\begin{figure}[htbp]
    \centering
    \includegraphics[width=0.45\textwidth]{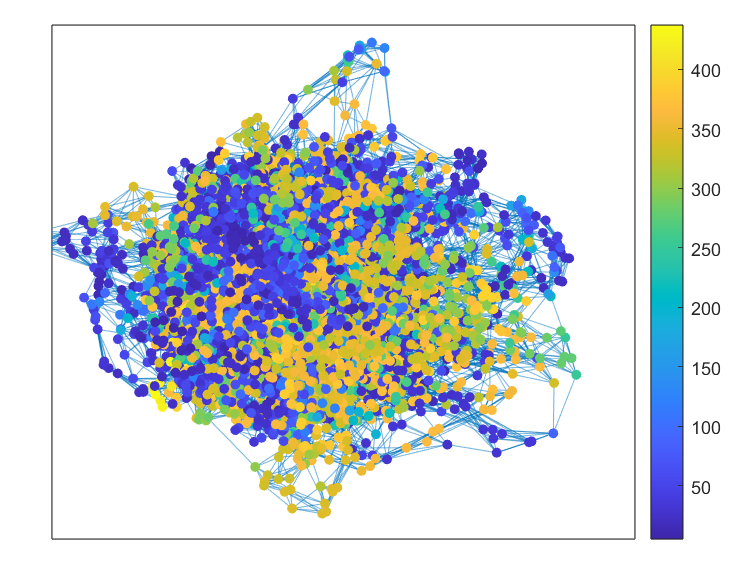}
    \caption{Simulation of infection cases on a Watts-Strogatz network, created using the epidemic model from~\cite{duan2019turing}.}
    \label{fig:duanws}
\end{figure}

To summarize, we make the following three main contributions in this paper:
\begin{itemize}
\item We establish two new \textbf{\textit{ Multiplex Bi-Virus Reaction-Diffusion models} (MBRD)} on multiplex networks: the super-infection model {\bf (MBRD-SI)} and the co-infection model {\bf (MBRD-CI)}. By incorporating reaction-diffusion and cross-diffusion dynamics, these models capture realistic spatial distributions of infections over large geographical ranges and account for complex network structures (see Section~\ref{sec:model}).
\item We perform instability analyses for reaction-diffusion systems with three (resp. four) morphogens on three-layer (resp. four-layer) multiplex networks. This includes explicit conditions for Turing and Turing-Hopf instabilities. To the best of our knowledge, prior work has only addressed reaction-diffusion epidemic systems on two-layer multiplex networks (see Sections~\ref{sec:instability} and~\ref{sec:four-state}).\pagebreak
\item We provide experimental evidence of Turing pattern formation in both MBRD-SI and MBRD-CI, confirming that our theoretical instability conditions give rise to distinct spatial structures (see Section~\ref{sec:examples}).
\end{itemize}

The remainder of this paper is organized as follows. Section~\ref{sec:turing} introduces preliminaries on Turing patterns. Section~\ref{sec:model} introduces novel reaction-diffusion models MBRD-SI and MBRD-CI for super-infection and co-infection.   Section~\ref{sec:instability} establishes instability conditions for three-morphogen systems including the MBRD-SI model, while Section~\ref{sec:four-state} treats the four-morphogen case including the MBRD-CI model. Section~\ref{sec:examples} presents experimental verification of Turing patterns. Section~\ref{sec:applications} discusses potential applications that our framework can be used for. Finally, Section~\ref{sec:conclusion} concludes this paper with a summary of this work and future extensions.



\section{Background}\label{sec:turing}

In 1952, Turing proposed that reaction-diffusion dynamics trigger the formation of many patterns in nature (e.g., the pattern in Figure~\ref{fig:boxfish}). These patterns, known as Turing patterns, are driven by interactions between substances, referred to as morphogens. Subsequently, Gierer and Meinhardt introduced the local autoactivation-lateral inhibition (LALI) framework in 1972, demonstrating that for Turing patterns to form, local self-activation and long-range inhibition must balance each other~\cite{meinhardt2000pattern}. In 1990, Turing patterns were first confirmed experimentally in the chlorite-iodide-malonic acid (CIMA) reaction~\cite{castets1990experimental}. In the following, we first introduce Turing patterns on continuous domains and then Turing patterns on networks. We also discuss previous spatial epidemic models with either one or two pathogens.

\begin{figure}[htbp]
    \centering
    \includegraphics[width=0.40\textwidth]{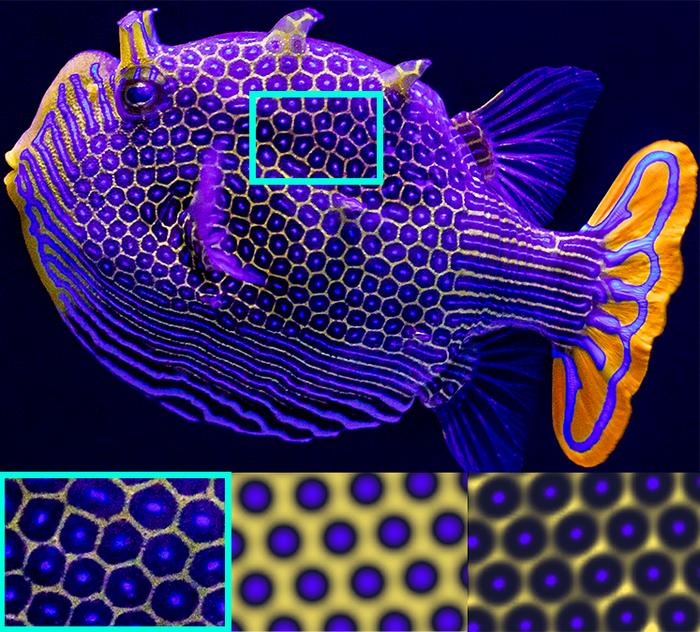}
    \caption{A boxfish (top), a closeup of its pigmentation pattern (bottom-left), and simulations (bottom-center and right), from~\cite{alessio2023diffusiophoresis}.}
    \label{fig:boxfish}
\end{figure}

\subsection{Turing Patterns on Continuous Domains}\label{sec:continuous}

Turing patterns are formed by small fluctuations in the concentration of morphogens, which grow and settle into a spatially organized pattern. The instabilities relevant to this paper are Turing, Hopf, and Turing-Hopf instabilities, which are defined below.
\begin{definition}
    {\em Turing instability} results in a stationary spatial pattern, {\em Hopf instability} results in temporal oscillations only, and {\em Turing-Hopf instability} results in both spatial and temporal oscillations over the same period of time.
\end{definition}

For the most simple example, we consider two reaction-diffusion equations of the form,
\begin{align}\label{eq:reactdiff1}
    \frac{\partial u}{\partial t}&=f(u,v)+D_u\nabla^2 u,\\
    \label{eq:reactdiff2}
    \frac{\partial v}{\partial t}&=g(u,v)+D_v\nabla^2 v,
\end{align}
where $f$ and $g$ describe the reaction kinetics of the morphogens, $D_u$ and $D_v$ are the diffusion coefficients and $\nabla^2 u={\partial^2 u}/{\partial x^2}$ where $x$ is the finite domain $[0,L]$. To obtain a unique pattern, we must also introduce a boundary condition. Most often, the boundary condition imposed is the {\em Neumann condition}, which specifies that there is no flux at the spatial boundary. 

\begin{theorem} [Turing instability conditions for Equations~(\ref{eq:reactdiff1}) and~(\ref{eq:reactdiff2})]~\cite[Equations~(7.13), (7.16), (7.17)]{maini2019turing}

\label{thm:instability}
    Let $f_u\coloneqq{\partial f}/{\partial u}$, $f_v\coloneqq{\partial f}/{\partial v}$, $g_u\coloneqq{\partial g}/{\partial u}$, and $g_v\coloneqq{\partial g}/{\partial v}$. The conditions for Turing instability are
\begin{align*}
    f_u+g_v &< 0, \\
    f_ug_v-f_vg_u &> 0,\\
D_vf_u+D_ug_v&>2\sqrt{D_uD_v(f_ug_v-f_vg_u)},\\
    k_{-}^2&<\left(\frac{n\pi}{L}\right)^2<k_{+}^2,
\end{align*}
where \[k_{\pm}^2=\frac{f_u+g_v\pm\sqrt{(f_u+g_v)^2-4D_uD_v(f_ug_v-f_vg_u)}}{2D_uD_v}.\]
\end{theorem}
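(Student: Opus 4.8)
The plan is to carry out the standard linear stability analysis of the homogeneous steady state and extract the four stated conditions as the algebraic requirements for a ``diffusion-driven instability.'' First I would let $(u^*,v^*)$ be a spatially uniform steady state, so that $f(u^*,v^*)=g(u^*,v^*)=0$, and write $u=u^*+\tilde u$, $v=v^*+\tilde v$ with small perturbations. Linearizing Equations~(\ref{eq:reactdiff1}) and~(\ref{eq:reactdiff2}) gives $\partial_t(\tilde u,\tilde v)^T = J(\tilde u,\tilde v)^T + \mathrm{diag}(D_u,D_v)\nabla^2(\tilde u,\tilde v)^T$, where $J=\rowvec{f_u & f_v \\ g_u & g_v}$ is the Jacobian evaluated at $(u^*,v^*)$. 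The first two conditions, $f_u+g_v<0$ and $f_ug_v-f_vg_u>0$, are then exactly $\mathrm{tr}\,J<0$ and $\det J>0$, i.e.\ the requirement that the steady state be stable in the \emph{absence} of diffusion (otherwise the instability would not be diffusion-driven).

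Next I would impose the Neumann boundary conditions on $[0,L]$ and expand the perturbation in the eigenfunctions $\cos(n\pi x/L)$, which satisfy $\nabla^2\phi_n = -(n\pi/L)^2\phi_n$. Writing $k^2=(n\pi/L)^2$ and seeking solutions proportional to $e^{\lambda t}\phi_n$, the dispersion relation is $\det\bigl(J - k^2\,\mathrm{diag}(D_u,D_v) - \lambda I\bigr)=0$, a quadratic in $\lambda$ whose coefficients depend on $k^2$. Instability for mode $k$ means $\mathrm{Re}\,\lambda(k^2)>0$; since the linear-in-$\lambda$ coefficient stays negative given $\mathrm{tr}\,J<0$ and $D_u,D_v>0$, the only way to get a positive root is for the constant term $h(k^2):=D_uD_v k^4 - (D_v f_u + D_u g_v)k^2 + (f_ug_v-f_vg_u)$ to become negative. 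Analyzing this upward parabola in $k^2$: it has real positive roots $k_\pm^2$ precisely when its minimum is negative, which after completing the square gives $D_v f_u + D_u g_v > 2\sqrt{D_uD_v(f_ug_v-f_vg_u)}$ (the third condition, also forcing $D_v f_u + D_u g_v>0$ so the vertex lies at positive $k^2$), and the roots are exactly the quoted $k_\pm^2$ by the quadratic formula. Finally, since only the discrete wavenumbers $k^2=(n\pi/L)^2$ are admissible, instability actually occurs iff some integer $n$ lands in the band where $h<0$, i.e.\ $k_-^2<(n\pi/L)^2<k_+^2$ — the fourth condition.

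I expect the only real subtlety, rather than an obstacle, to be bookkeeping: verifying that $\mathrm{tr}\,J<0$ together with positive diffusion coefficients indeed keeps the $\lambda^1$-coefficient of the dispersion quadratic strictly negative for all $k^2\ge 0$, so that sign changes of $h(k^2)$ are genuinely necessary and sufficient for a positive eigenvalue; and confirming that condition three is equivalent to the discriminant $(D_v f_u + D_u g_v)^2 - 4D_uD_v(f_ug_v-f_vg_u)$ being positive \emph{and} the vertex abscissa being positive simultaneously (the inequality as written packages both). Since this is a textbook derivation (it is quoted from~\cite{maini2019turing}), the ``proof'' is really just this computation, and I would present it compactly rather than belaboring the algebra; the substantive work of the paper lies in generalizing this scheme to the three- and four-morphogen network settings of Sections~\ref{sec:instability} and~\ref{sec:four-state}.
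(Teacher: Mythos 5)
Your proposal is correct and follows essentially the same route as the paper, which only sketches this standard argument (perturb the homogeneous steady state, linearize, substitute the ansatz $\bm{a}\exp(ikx+\lambda t)$, and analyze the resulting dispersion relation) before deferring to Maini's review for the remaining algebra. One small remark: applying the quadratic formula to $h(k^2)=D_uD_vk^4-(D_vf_u+D_ug_v)k^2+(f_ug_v-f_vg_u)=0$ yields roots whose numerator is $D_vf_u+D_ug_v$, so the $f_u+g_v$ in the paper's displayed formula for $k_{\pm}^2$ is a typo in the statement (it is inconsistent with the third condition), not a flaw in your derivation.
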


Theorem~\ref{thm:instability} is proven by considering a small perturbation $(\hat{u}(x,t),\hat{v}(x,t))$ to the equilibrium state and linearizing the system with multivariable Taylor polynomial expansions.
We then substitute an ansatz solution of the form $\bm{\hat{u}}=\bm{a}\exp(ikx+\lambda(k^2)t)$, where $\bm{\hat{u}}\coloneqq\rowvec{\hat{u}&\hat{v}}^T$, $\bm{a}$ is a constant vector, $k$ is the wave number or the number of spatial oscillations within a certain length, and $\lambda$ is the temporal growth rate. From this, we obtain the characteristic equation $\lambda^2-(f_u+g_v)\lambda+(f_ug_v-f_vg_u)=0$, and with further analysis, we obtain the four conditions above. We will use a similar idea for the instability analysis in this paper.


\subsection{Turing Patterns on Networks}
\label{sec:network}

Most Turing models proposed have been on continuous domains; however, the branch of Turing patterns on complex networks, first introduced in~\cite{nakao2010turing}, has recently become prevalent. 

In the following, we introduce Turing patterns on complex networks. Consider a unweighted network $G\coloneqq(V,E)$ with $|V|=N$, where an edge from node $i$ to node $j$ is denoted by $(i,j)$. We assume here that $G$ is undirected. The entries of the adjacency matrix $\bm{A}(G)$ are defined as 
\[
A^G_{ij} \coloneqq
\left\{
  \begin{array}{ll}
    1, & \text{if } (i,j) \in E, \\
    0, & \text{otherwise}
  \end{array}
\right.
\]

We define a matrix $\bm{L}(G)$ as a function of graph $G$ as follows:
\begin{definition}\label{def:laplacian}
    If $G$ is a graph with $N$ nodes, then we define $\bm{L}(G)$ to be an $N\times N$ matrix with entries
    \[L_{ij}\coloneqq A^G_{ij}-\delta_{ij}k^G_i,\]
    where \[\delta_{ij}\coloneqq 
\begin{cases}
    1, & \text{if $i=j$},\\
    0, & \text{otherwise.}
\end{cases}\] and $A^G_{ij}$ are the entries of the adjacency matrix $\bm{A}(G)$. Moreover
$k^G_i$ is the degree of node $i$ and satisfies $k^G_i\coloneqq\sum_{j=1}^N A^G_{ij}$.
\end{definition}
         
Note that our definition of $\bm{L}$ is the negative of the combinatorial Laplacian. The diffusion of a morphogen from node $j$ to node $i$ is of rate $D_u(u_j-u_i)$. We can add these rates to get the total amount of the morphogen that enters a node. Thus, the amount of substance entering node $i$ is 
\begin{align*}
\dot{u_i}=D_u\sum_{j=1}^n A_{ij}(u_j-u_i)
&=D_u\left(\sum_{j=1}^n A_{ij}u_j\right)-D_uk_iu_i\\
&=D_u\sum_{j=1}^n L_{ij}u_j.
\end{align*}

In a network, a two-morphogen reaction-diffusion system is formulated as
\begin{align*}
    \frac{d u_i}{dt}&=f(u_i,v_i)+D_u\sum_{j=1}^n L_{ij}u_j,\\
    \frac{d v_i}{dt}&=g(u_i,v_i)+D_v\sum_{j=1}^n L_{ij}v_j,
\end{align*}
for all $i=1,2,\dots,n$, where $L_{ij}$ are the entries of the combinatorial Laplacian.

\subsection{Previous Epidemic Models}

In reaction-diffusion epidemic models, the states, such as Susceptible and Infected, are treated as morphogens and the model is a function of the relative densities of those separate populations at every node. To the best of our knowledge, the first SI reaction-diffusion model was introduced by Webb~\cite{webb1981reaction} in 1981. This framework has been extended to analyze the spread of specific pathogens. For example, Bai {\em et al.}~\cite{bai2018reaction} proposed a malaria reaction-diffusion model, accounting for seasonality and incubation. Likewise, Wang {\em et al.}~\cite{wang2024rigorous} studied a similar COVID model accounting for superspreaders and asymptomatic cases on a continuous domain. Recent research has also proposed epidemic models on complex networks. For instance, Duan {\em et al.}~\cite{duan2019turing} investigated a SIS reaction-diffusion model on a single-layer complex network. 

Epidemic models have been studied on two-layer multiplex networks, where layers have the same sets of nodes but can have different connectivity and house different diffusing morphogens representing different aspects of the system. Zhao and Shen introduced a reaction-diffusion epidemic model with $S$ and $I$ states on a two-layer network~\cite{zhao2025navigating} with cross-diffusion, meaning that, when the movement of individuals on the $S$ and $I$ layers induce diffusion on the other layer. Reaction-diffusion two-strain models have also been proposed on continuous domains. Shi and Zhao~\cite{shi2021analysis} analyzed a reaction-diffusion two-strain malaria model and Lu {\em et al.}~\cite{lu2024application} introduced a two-strain COVID model.  

\begin{table}[ht]
\noindent
\begin{tabularx}{0.45\textwidth}{|l|X|}
\hline
\multicolumn{2}{|c|}{\textbf{Parameters for Epidemic Models}} \\
\hline
\textbf{Symbol} & \textbf{Description} \\
\hline
$\beta_1$ & Transmission rate of pathogen 1 from population infected only with pathogen 1\\
$\beta_2$ & Transmission rate of pathogen 2 from population infected only with pathogen 1\\
$\beta_{10}$ & Transmission rate of pathogen 1 only from co-infected population\\
$\beta_{02}$ & Transmission rate of pathogen 2 only from co-infected population\\
$\beta_{12}$ & Transmission rate of co-infection from co-infected population\\
$\gamma_1$ & Recovery rate for pathogen 1 \\
$\gamma_2$ & Recovery rate for pathogen 2 \\
$\alpha_1$ & Virulence of pathogen 1\\
$\alpha_2$ & Virulence of pathogen 2 \\
$r$ & Natural growth of susceptible population\\
$K$ & Maximum environmental capacity density\\
$A$ & Critical spatial carrying capacity density\\
$\mu$ & Natural mortality rate \\
$\sigma$ & Rate of host takeover by the more virulent strain \\
$d_{11}$ & Diffusion rate of susceptible population\\
$d_{12}$ & Cross-diffusion rate induced by movement of population infected with pathogen 1\\
$d_{13}$ & Cross-diffusion rate induced by movement of population infected with pathogen 2\\
$d_{22}$ & Diffusion rate of population infected with pathogen 1\\
$d_{33}$ & Diffusion rate of population infected with pathogen 2\\
\hline
\end{tabularx}
\caption{Parameters used in the epidemic models.}
\label{table:symbol_table}
\end{table}

To the extent of our knowledge, no previous reaction-diffusion models have considered {\bf Multiplex Bi-Virus Reaction-Diffusion frameworks}, including the super-infection case (MBRD-SI) or the co-infection case (MBRD-CI), on discrete domains.

\section{Epidemic Models}\label{sec:model}

In this section, we develop the two {\bf Multiplex Bi-Virus Reaction-Diffusion models}: the super-infection model {\bf (MBRD-SI)} and the co-infection model {\bf (MBRD-CI)}. In this context, super-infection refers to when pathogens cannot coexist in the same host and a more virulent pathogen can ``steal'' the host from a less virulent pathogen. Co-infection describes scenarios when a host can be infected with both viruses at once. In our co-infection model, no pathogen can ``steal'' hosts from the other.

In addition to interactions between two different viruses, our superinfection and co-infection models apply to interactions between different strains of the same virus~\cite{nowak1994superinfection,choisy2010mixed,alizon2013dynamics}. Following conventional notation, we use $S$, $I_1$, $I_2$, and $I_{12}$ in the classic SIS model to represent the densities of the susceptible population, population mono-infected by pathogen $1$, population mono-infected by pathogen $2$, and the co-infected population. We will sometimes refer to a host infected by pathogen $1$ (resp. pathogen $2$) as $I$-infected ($J$-infected).

When discussing the reaction-diffusion equations, we use the symbols $S$, $I$, $J$, and $C$ instead to represent the susceptible, $I$-infected (both mono and co-infections), $J$-infected, and co-infected population densities. Note that Table~\ref{table:symbol_table} introduces the key notation used in Sections~\ref{sec:superinfect} and~\ref{sec:coinfect-model}.

\subsection{The Super-Infection Model (MBRD-SI)}
\label{sec:superinfect}

The following superinfection SIS model was developed by Nowak and May in 1994~\cite{nowak1994superinfection}:
\begin{align*}
    \frac{dS}{dt}&=B-(\mu+\beta_1I_1+\beta_2 I_2)S,\\
    \frac{dI_1}{dt}&=I_1(\beta_1 S-\mu-\alpha_1-\sigma\beta_2 I_2),\\
    \frac{dI_2}{dt} &=I_2(\beta_2S-\mu-\alpha_2+\sigma\beta_2I_1),\\
    1&=S+I_1+I_2,
\end{align*}
where $S$, $I_1$, and $I_2$ represent the proportion of the population that is susceptible, infected by the first strain, and infected by the second strain, respectively. It is assumed that both strains cannot coinfect a single host. Moreover, this model assumes that pathogen $2$ is more virulent than pathogen $1$. Note that $\sigma$ represents the relative rate of superinfection of hosts already infected with pathogen $1$ relative to the transmission of pathogen $2$ to uninfected hosts. When $\sigma>1$, hosts already infected with pathogen $1$ are more likely than uninfected hosts to become infected with pathogen $2$.

Inspired by~\cite{zhao2025navigating}, we incorporate a logistic growth framework to model the growth of the susceptible population, which is suited for reaction dynamics in small communities or cities. Due to factors such as low social capital, populations with low densities will grow relatively slowly. Moreover, due to resource shortages and lower quality of life, populations with high densities will often converge to a carrying capacity, exhibiting the Allee effect.

We adjust this model by adding in recovery rates from both pathogens and modify it so that $S$, $I_1$, and $I_2$ represent the number of individuals or the population densities, as shown in the following system:
\begin{align*}
    \frac{dS}{dt} &= rS\left(1-\frac{S}{K}\right)\left(\frac{S}{A}-1\right)-\frac{(\beta_1I_1+\beta_2 I_2)S}{S+I_1+I_2}\\
    &\quad +\gamma_1 I_1+\gamma_2 I_2-\mu S, \\
\frac{dI_1}{dt} &= I_1\left(\frac{\beta_1 S}{S+I_1+I_2}-\mu-\alpha_1-\gamma_1-\frac{\sigma\beta_2 I_2}{S+I_1+I_2}\right), \\
\frac{dI_2}{dt} &= I_2\left(\frac{\beta_2S}{S+I_1+I_2}-\mu-\alpha_2-\gamma_2+\frac{\sigma\beta_2I_1}{S+I_1+I_2}\right).
\end{align*}

We represent this system with Figure~\ref{fig:superinfect_flowchart}. The susceptible population growth is described by 
$ rS\left(1-\frac{S}{K}\right)\left(\frac{S}{A}-1\right)$. The movement from the susceptible population to the two infected populations are represented by
$\frac{\beta_1I_1S}{S+I_1+I_2}\quad\text{ and }\quad\frac{\beta_2 I_2S}{S+I_1+I_2}$. The movement from the $I$-infected to $J$-infected population due to superinfection is
$\frac{\sigma\beta_2 I_1I_2}{S+I_1+I_2}$. The movement from either infected population to the susceptible population due to recovery is represented by $\gamma_1 I$ and $\gamma_2 J$. Finally, the total population deaths are represented by $\mu S$, $(\mu +\alpha_1)I$, and $(\mu+\alpha_2)J$ for the three populations, respectively.

\begin{figure}[htbp]
    \centering
    \includegraphics[width=0.4\textwidth]{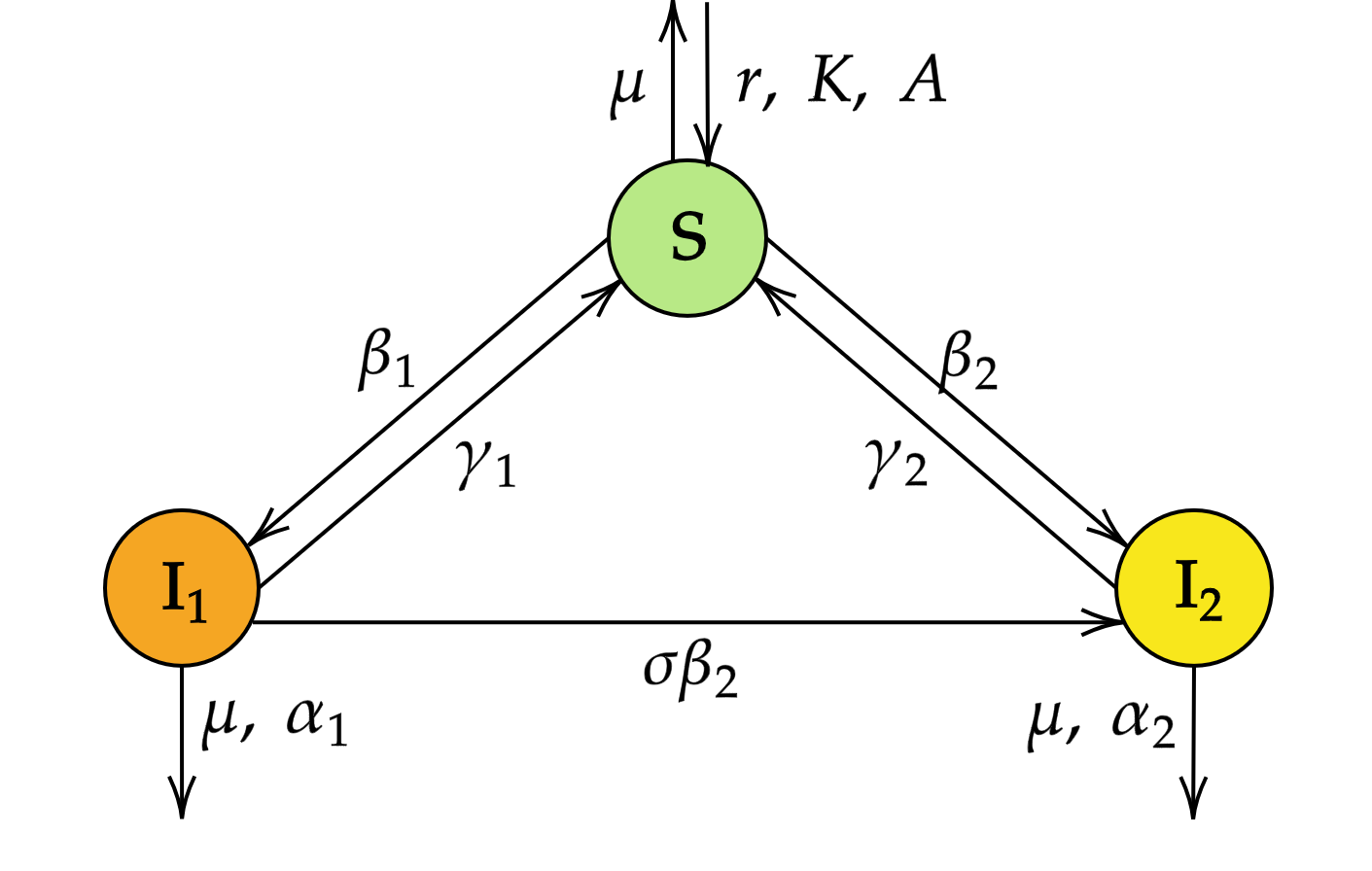}
    \caption{Flowchart for the super-infection model (MBRD-SI).}
    \label{fig:superinfect_flowchart}
\end{figure}

We consider the three-layer multiplex network pictured in Figure~\ref{fig:superinfect_multiplex}, where each layer has the same set of nodes. In an epidemiological context, it is most reasonable for the average degree of the $I$ and $J$ layers to be the same or lower than the average degree of the $S$ layer, as infected people tend to migrate less.
\begin{figure}[htbp]
    \centering
    \includegraphics[width=0.40\textwidth]{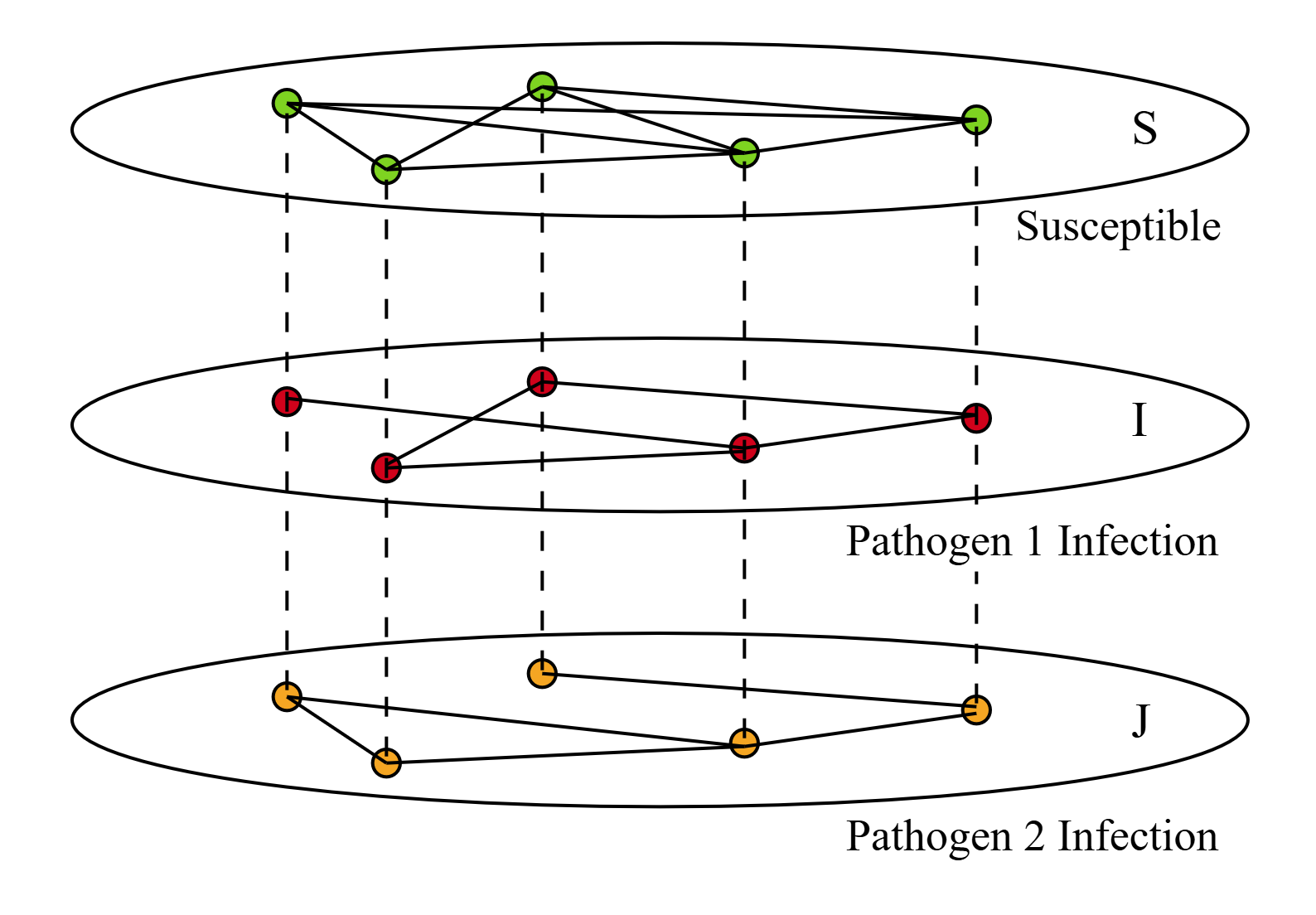}
    \caption{Three-layer multiplex network for MBRD-SI.}
    \label{fig:superinfect_multiplex}
\end{figure}

In order to understand the spatial distribution of infected populations, we treat the $S$, $I$, and $J$ population densities as morphogens which diffuse on the corresponding separate layers of Figure~\ref{fig:superinfect_multiplex}. We let $S_i$, $I_i$, and $J_i$ be the densities of the susceptible population, $I$-infected population, and $J$-infected population, respectively. We denote the $S$, $I$, and $J$ layers to be graphs $G_S$, $G_I$, and $G_J$, respectively. Then, we let $\bm{L}^{(S)}=\bm{L}(G_S)$, $\bm{L}^{(I)}=\bm{L}(G_I)$, and $\bm{L}^{(J)}=\bm{L}(G_J)$, defined according to Definition~\ref{def:laplacian}. We also denote the entries of $\bm{L}^{(S)}$, $\bm{L}^{(I)}$, and $\bm{L}^{(J)}$ in row $i$ and column $j$ to be $L^{(S)}_{ij}$, $L^{(I)}_{ij}$, and $L^{(J)}_{ij}$.

Then, we propose the following reaction-diffusion model: 

\begin{equation}\label{eq:superinfect_model}
\begin{aligned}
    \frac{dS_i}{dt} &= rS_i\left(1-\frac{S_i}{K}\right)\left(\frac{S_i}{A}-1\right)-\frac{(\beta_1I_i+\beta_2 J_i)S_i}{S_i+I_i+J_i}\\
    &\quad+\gamma_1 I_i+\gamma_2 J_i-\mu S\\
    &\quad+d_{11}\sum_{j=1}^NL_{ij}^{(S)}S_j+d_{12}\sum_{j=1}^NL_{ij}^{(I)}I_j+d_{13}\sum_{j=1}^NL_{ij}^{(J)}J_j, \\
\frac{dI_i}{dt} &= I_i\left(\frac{\beta_1 S_i}{S_i+I_i+J_i}-\mu-\alpha_1-\gamma_1-\frac{\sigma\beta_2 J_i}{S_i+I_i+J_i}\right)\\
&\quad+d_{22}\sum_{j=1}^NL_{ij}^{(I)}I_j, \\
\frac{dJ_i}{dt} &= J_i\left(\frac{\beta_2S_i}{S_i+I_i+J_i}-\mu-\alpha_2-\gamma_2+\frac{\sigma\beta_2I_i}{S_i+I_i+J_i}\right)\\
&\quad+d_{33}\sum_{j=1}^NL_{ij}^{(J)}J_j.
\end{aligned}
\end{equation}

We include the cross-diffusion terms $d_{12}\sum_{j=1}^NL_{ij}^{(I)}I_j$ and $d_{13}\sum_{j=1}^NL_{ij}^{(J)}J_j$ to indicate how the susceptible population moves in response to infected population densities. In particular, when $d_{12}$ is positive (resp. negative), susceptible individuals gravitate toward areas with a low (resp. high) density of individuals infected with pathogen $1$, and when $d_{13}$ is positive (resp. negative), susceptible individuals gravitate toward areas with a low (resp. high) density of individuals infected with pathogen $2$.

\subsection{The Co-Infection Models (MBRD-CI)}
\label{sec:coinfect-model}
The co-infection SIS model proposed by Gao \emph{et al.} in 2016~\cite{gao2016coinfection} is
\begin{align*}
    \frac{dS}{dt}&=\mu-(\lambda_1+\lambda_2+\lambda_{12\rightarrow 1}+\lambda_{12\rightarrow 12}+\lambda_{12\rightarrow 2})S\\
    &\quad+(\gamma_1I_1+\gamma_2I_2)-\mu S,\\
    \frac{dI_1}{dt}&=(\lambda_1+\lambda_{12\rightarrow 1})S-(\lambda_2+\lambda_{12\rightarrow 2}+\lambda_{12\rightarrow 12})I_1\\
    &\quad+(\gamma_2 I_{12}-\gamma_1 I_1)-\mu I_1,\\
    \frac{dI_2}{dt}&=(\lambda_2+\lambda_{12\rightarrow 2})S-(\lambda_1+\lambda_{12\rightarrow 1}+\lambda_{12\rightarrow 12})I_2\\
    &\quad+(\gamma_1I_{12}-\gamma_2I_2)-\mu I_2,\\
    \frac{I_{12}}{dt}&=\lambda_{12\rightarrow 12}S+(\lambda_2+\lambda_{12\rightarrow 2}+\lambda_{12\rightarrow 12})I_1\\
    &\quad+(\lambda_1+\lambda_{12\rightarrow 1}+\lambda_{12\rightarrow 12})I_2-(\gamma_1+\gamma_2)I_{12}-\mu I_{12},\\
    1&=S+I_1+I_2+I_{12},
\end{align*}
where $\lambda_1\coloneqq\beta_1 I_1$, $\lambda_2\coloneqq\beta_2 I_2$, $\lambda_{12\rightarrow 12}\coloneqq\beta_{12}I_{12}$, $\lambda_{12\rightarrow 1}\coloneqq\beta_{10}I_{12}$, and $\lambda_{12\rightarrow 2}\coloneqq\beta_{02}I_{12}$.

Note that the dynamics of the two pathogens can be described in the following four ways~\cite{gao2016coinfection}:

\begin{itemize}
    \item In non-interaction transmission, the presence of the two diseases do not affect each other. In the model, this translates to $\beta_{12}+\beta_{10}=\beta_1$ and $\beta_{12}+\beta_{02}=\beta_2$.
    \item Mutual enhancement occurs when
    $\beta_{12}+\beta_{10}>\beta_1$ and $\beta_{12}+\beta_{02}>\beta_2$.
    \item The enhancement of one pathogen and inhibition of the other occurs when either
    $\beta_{12}+\beta_{10}>\beta_1$ and $\beta_{12}+\beta_{02}<\beta_2$, or\\
    $\beta_{12}+\beta_{10}<\beta_1$ and $\beta_{12}+\beta_{02}>\beta_2$.
    \item Mutual inhibition occurs when
    $\beta_{12}+\beta_{10}<\beta_1$ and $\beta_{12}+\beta_{02}<\beta_2$.
\end{itemize}

We define $I$ (resp. $J$) to represent the total $I$-infected (resp. $J$-infected) density, including both mono- and co-infections, and use $C$ in place of $I_{12}$. We combine mono- and co-infections because we assume that the two pathogens exhibit independent diffusion dynamics, and we want to consider the diffusion of the population densities associated with both mono- and co-infections. After making these modifications, we have

\begin{align*}
    \frac{dS}{dt}&=B-(\beta_1 I+\beta_2 J)S\\
    &\quad -(\beta_{10}+\beta_{02}+\beta_{12}-\beta_1-\beta_2)CS\\
    &\quad+\gamma_1I+\gamma_2J-(\gamma_{1}+\gamma_{2})C-\mu S,\\
    \frac{d I}{dt}&=[\beta_1I+(\beta_{10}+\beta_{12}-\beta_1)C](S+J-C)-\gamma_1I\\
    &\quad-\alpha_1(I-C)-\alpha_{12}C-\mu I,\\
    \frac{d J}{dt}&=[\beta_2J+(\beta_{02}+\beta_{12}-\beta_2)C](S+I-C)-\gamma_2J\\
    &\quad-\alpha_2(J-C)-\alpha_{12}C-\mu J,\\
    \frac{d C}{dt}&=\beta_{12}CS+[\beta_2J+(\beta_{02}+\beta_{12}-\beta_2)C](I-C)\\
    &\quad+[\beta_1I+(\beta_{10}+\beta_{12}-\beta_1)C](J-C)\\
    &\quad-(\gamma_{1}+\gamma_{2}+\alpha_{12})C-\mu C,\\
    1&=S+I+J-C.
\end{align*}

\begin{figure}[htbp]
    \centering
    \includegraphics[width=0.4\textwidth]{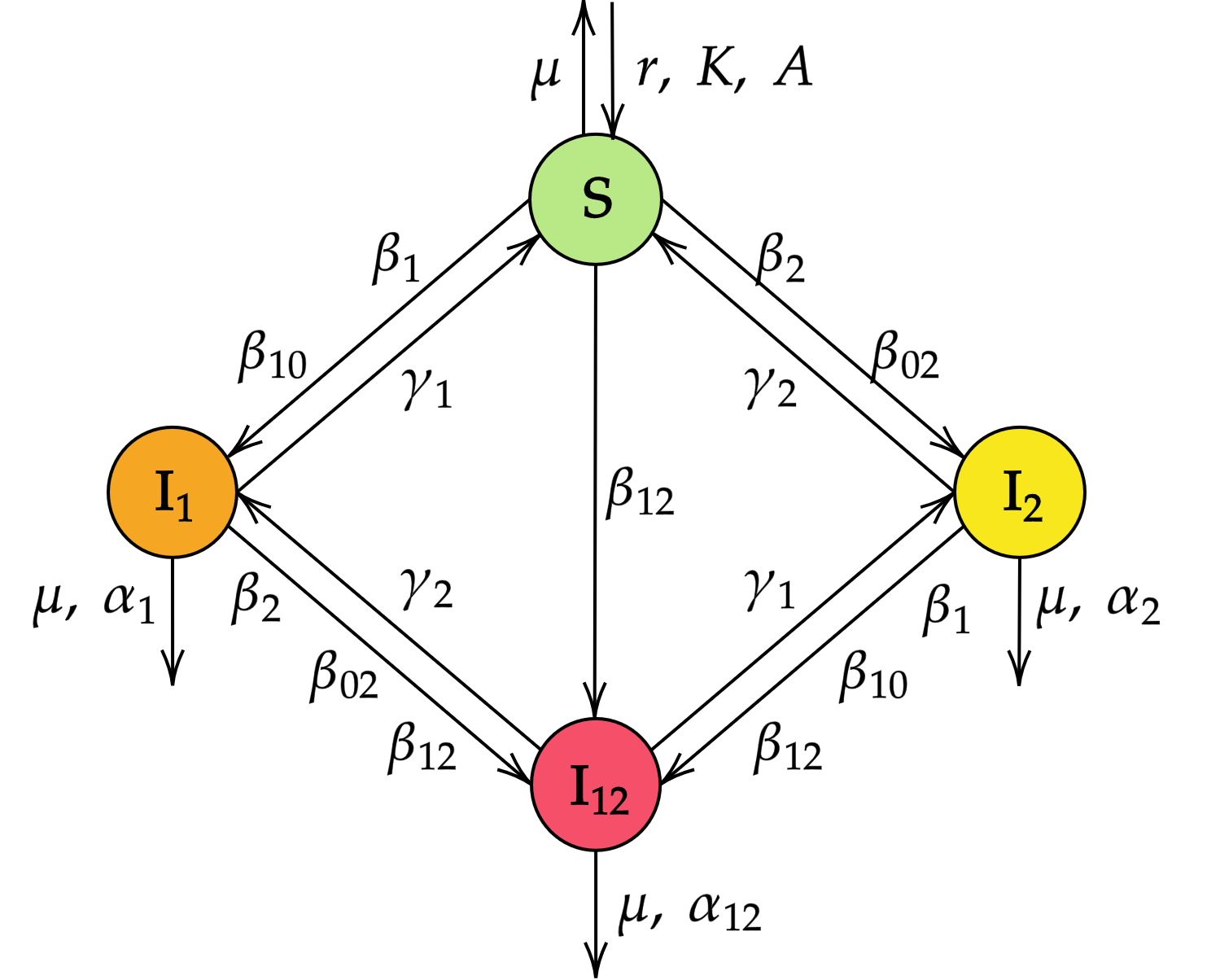}
    \caption{Flowchart for the co-infection model (MBRD-CI).}
    \label{fig:coinfect_flowchart}
\end{figure}

\begin{figure}[htbp]
    \centering
    \includegraphics[width=0.4\textwidth]{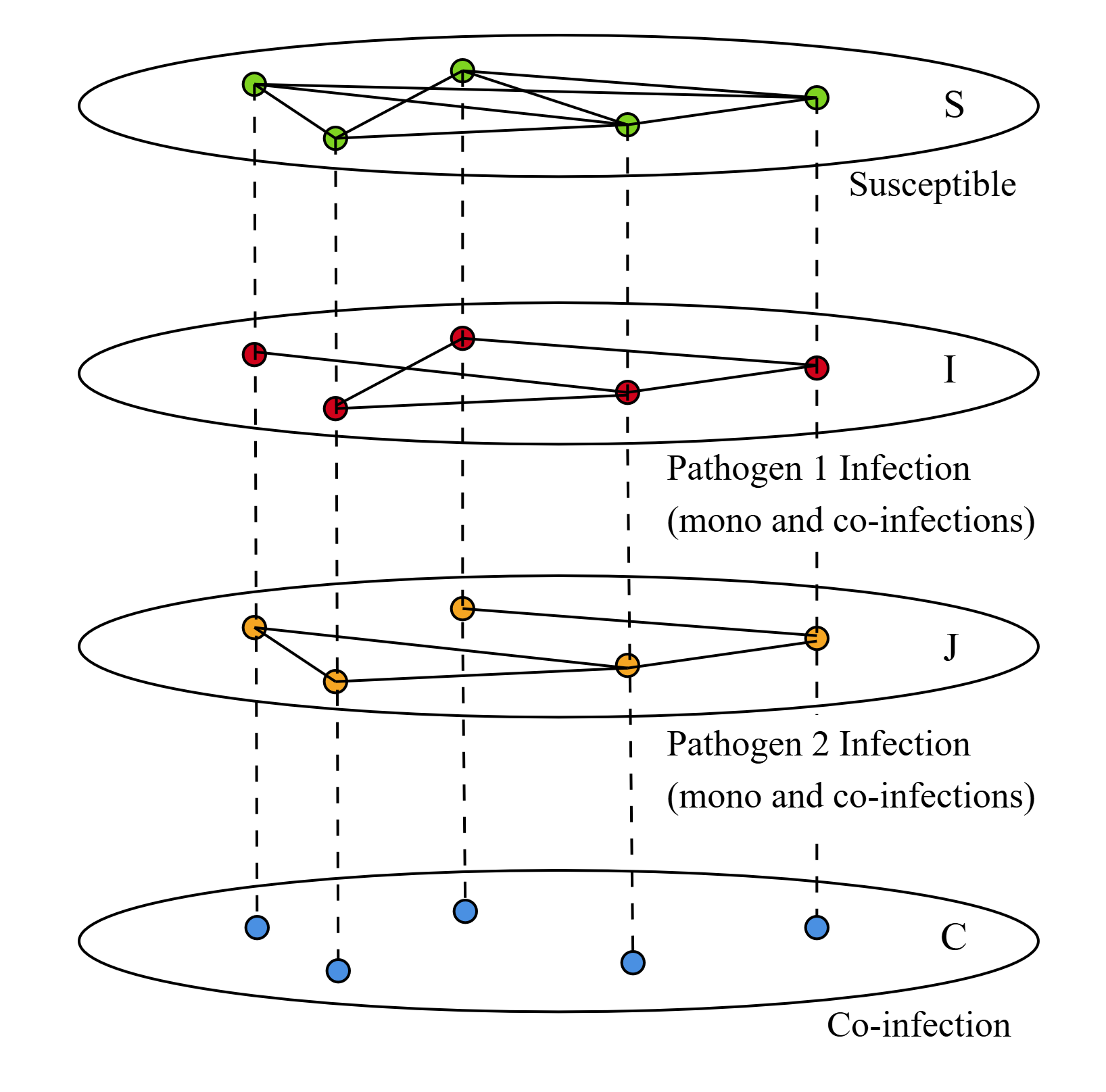}
    \caption{Four-layer multiplex network for MBRD-CI.}
    \label{fig:multiplex4layer}
\end{figure}

We modify this model so that $S$, $I$, $J$, and $C$ account for either the number of individuals or population densities. Moreover, we adopt a logistic growth framework instead of a constant birth rate, as follows:

\begin{align*}
    \frac{dS}{dt}&=rS\left(1-\frac{S}{K}\right)\left(\frac{S}{A}-1\right)-\frac{(\beta_1 I+\beta_2 J)S}{S+I+J-C}\\
    &\quad-\frac{(\beta_{10}+\beta_{02}+\beta_{12}-\beta_1-\beta_2)CS}{S+I+J-C}\\
    &\quad+\gamma_1I+\gamma_2J-(\gamma_{1}+\gamma_{2})C-\mu S,\\
    \frac{d I}{dt}&=[\beta_1I+(\beta_{10}+\beta_{12}-\beta_1)C]\cdot \frac{S+J-C}{S+I+J-C}-\gamma_1I\\
    &\quad-\alpha_1(I-C)-\alpha_{12}C-\mu I,\\
    \frac{d J}{dt}&=[\beta_2J+(\beta_{02}+\beta_{12}-\beta_2)C]\cdot\frac{S+I-C}{S+I+J-C}-\gamma_2J\\
    &\quad-\alpha_2(J-C)-\alpha_{12}C-\mu J,\\
    \frac{d C}{dt}&=\frac{\beta_{12}CS}{S+I+J-C}+\frac{[\beta_2J+(\beta_{02}+\beta_{12}-\beta_2)C](I-C)}{S+I+J-C}\\
    &\quad+\frac{[\beta_1I+(\beta_{10}+\beta_{12}-\beta_1)C](J-C)}{S+I+J-C}\\
    &\quad-(\gamma_{1}+\gamma_{2}+\alpha_{12})C-\mu C,\\
\end{align*}

Recall that our model has four states: $S$, $I_1$, $I_2$, and $I_{12}$, which are the susceptible, pathogen $1$ mono-infected, pathogen $2$ mono-infected, and co-infected densities. This system can be represented by Figure~\ref{fig:coinfect_flowchart}. The susceptible population has birth and natural death rates represented by $r\left(1-\frac{S}{K}\right)\left(\frac{S}{A}-1\right)$ and $\mu S$, respectively. The movement of individuals from $S$ to the mono-infected states and from the mono-infected states to $I_{12}$ are controlled by the infection rates. The movement of individuals from $I_{12}$ back to a mono-infected state, as well as from a mono-infected state back to $S$, are controlled by the recovery rates. Finally, deaths in the $I_1$, $I_2$, and $I_{12}$ populations include both natural and infection-related deaths.

We introduce the four-layer multiplex network in Figure~\ref{fig:multiplex4layer}. For the most part, we assume again that the average degrees of the $I$ and $J$ layers are less than the average degree of the $S$ layer. Only the $S$, $I$, and $J$ layers experience diffusion and thus, edges are not included in the $C$ layer. This is because the relative density for each node on the $C$ layer can directly be calculated from the densities of the corresponding nodes on the other three layers.

We treat the $S$, $I$, and $J$ populations as morphogens and let $S_i$, $I_i$, $J_i$, and $C_i$ be the densities of their corresponding populations on node $i$. Letting $G_S$, $G_I$, and $G_J$ be the networks on layers $S$, $I$, and $J$, we establish the same definitions for $\bm{L}^{(S)}$, $\bm{L}^{(I)}$, and $\bm{L}^{(J)}$, and their respective entries $L^{(S)}_{ij}$, $L^{(I)}_{ij}$, and $L^{(J)}_{ij}$ as our superinfection model in Equation~\ref{eq:superinfect_model}.

Then, we have
\begin{equation}\label{eq:coinfect_model}
\begin{aligned}
    \frac{dS_i}{dt}&=rS_i\left(1-\frac{S_i}{K}\right)\left(\frac{S_i}{A}-1\right)-\frac{(\beta_1 I_i+\beta_2 J_i)S_i}{S_i+I_i+J_i-C_i}\\
    &\quad-\frac{(\beta_{10}+\beta_{02}+\beta_{12}-\beta_1-\beta_2)C_iS_i}{S_i+I_i+J_i-C_i}\\
    &\quad+\gamma_1I_i+\gamma_2J_i-(\gamma_{1}+\gamma_{2})C_i-\mu S_i\\
    &\quad+d_{11}\sum_{j=1}^NL_{ij}^{(S)}S_j+d_{12}\sum_{j=1}^NL_{ij}^{(I)}I_j+d_{13}\sum_{j=1}^NL_{ij}^{(J)}J_j,\\
    \frac{d I_i}{dt}&=[\beta_1I_i+(\beta_{10}+\beta_{12}-\beta_1)C_i]\cdot \frac{S_i+J_i-C_i}{S_i+I_i+J_i-C_i}\\
    &\quad-\gamma_1I_i-\alpha_1(I_i-C_i)-\alpha_{12}C_i-\mu I_i\\
    &\quad+d_{22}\sum_{j=1}^NL_{ij}^{(I)}I_j,\\
    \frac{d J_i}{dt}&=[\beta_2J_i+(\beta_{02}+\beta_{12}-\beta_2)C_i]\cdot\frac{S_i+I_i-C_i}{S_i+I_i+J_i-C_i}\\
    &\quad-\gamma_2J_i-\alpha_2(J_i-C_i)-\alpha_{12}C_i-\mu J_i\\
    &\quad+d_{33}\sum_{j=1}^NL_{ij}^{(J)}J_j,\\
        \end{aligned}
    \end{equation}
    \begin{equation}
    \begin{aligned}\nonumber
    \frac{d C_i}{dt}&=\frac{\beta_{12}C_iS_i}{S_i+I_i+J_i-C_i}\\
    &\quad+\frac{[\beta_2J_i+(\beta_{02}+\beta_{12}-\beta_2)C_i](I_i-C_i)}{S_i+I_i+J_i-C_i}\\
    &\quad+\frac{[\beta_1I_i+(\beta_{10}+\beta_{12}-\beta_1)C_i](J_i-C_i)}{S_i+I_i+J_i-C_i}\\
    &\quad-(\gamma_{1}+\gamma_{2}+\alpha_{12})C_i-\mu C_i.
\end{aligned}
\end{equation}

\section{Three-State Instability Analysis}\label{sec:instability}

In this section, we perform an instability analysis for reaction-diffusion models with three morphogens on networks. In particular, we first derive general instability conditions for such models on a three-layer multiplex network. Then, we establish additional conditions for a special case where the layers of the multiplex network are identical. The conditions discussed in this section apply to the three-layer model in Equation~\ref{eq:superinfect_model}.

\subsection{Instability Analysis on a Three-Layer Multiplex Network}\label{sec:instab-multiplex}

We now derive the conditions for Turing and Turing-Hopf instability in a reaction-diffusion system for three distinct morphogens on a three-layer multiplex network. We consider the following system with morphogens $S$, $I$, and $J$, where $S_i$, $I_i$, and $J_i$ are the densities of the morphogens in each node of the network.
\begin{align*}
    \frac{d S_i}{dt}&=f(S_i,I_i,J_i)\\
    &\quad+d_{11}\sum_{j=1}^nL_{ij}^{(S)}S_j+d_{12}\sum_{j=1}^nL_{ij}^{(I)}I_j+d_{13}\sum_{j=1}^nL_{ij}^{(J)}J_j,\\
    \frac{d I_i}{dt}&=g(S_i,I_i,J_i)+d_{22}\sum_{j=1}^nL_{ij}^{(I)}I_j,\\
    \frac{d J_i}{dt}&=h(S_i,I_i,J_i)+d_{33}\sum_{j=1}^nL_{ij}^{(J)}J_j.
\end{align*}

Let $(S^*,I^*,J^*)$ be the steady state densities on all nodes. We define $f_S$ to be $\frac{\partial f}{\partial S}|_{(S^*,I^*,J^*)}$ and $f_I$, $f_J$, $g_S$, $g_I$, $g_J$, $h_S$, $h_I$, and $h_J$ similarly. We introduce a perturbation $\left(\delta S_i,\delta I_i,\delta J_i\right)$ to the equilibrium densities. Then, by multinomial Taylor expansions, we have
\begin{equation}\label{eq:perturb1}
\begin{aligned}
    \frac{d \delta S_i}{dt}&=f_S\delta S_i+f_I\delta I_i+f_J\delta J_i+d_{11}\sum_{j=1}^nL_{ij}^{(S)}\delta S_j\\&\quad +d_{12}\sum_{j=1}^nL_{ij}^{(I)}\delta I_j+d_{13}\sum_{j=1}^nL_{ij}^{(J)}\delta J_j,\\
    \frac{d \delta I_i}{dt}&=g_S\delta S_i+g_I\delta I_i+g_J\delta J_i+d_{22}\sum_{j=1}^nL_{ij}^{(I)}\delta I_j\\
    \frac{d \delta J_i}{dt}&=h_S\delta S_i+h_I\delta I_i+h_J\delta J_i+d_{33}\sum_{j=1}^nL_{ij}^{(J)}\delta J_j.
\end{aligned}
\end{equation}

We approximate this system as follows: 
\begin{equation}\label{eq:perturb_coinfect}
\begin{aligned}
    \frac{d \delta S_i}{dt}&=f_S\delta S_i+f_I\delta I_i+f_J\delta J_i\\
    &\quad-d_{11}k_i^{(S)}\delta S_i-d_{12}k_i^{(I)}\delta I_i-d_{13}k_i^{(J)} \delta J_i,\\
    \frac{d \delta I_i}{dt}&=g_S\delta S_i+g_I\delta I_i+g_J\delta J_i-d_{22}k_i^{(I)}\delta I_i\\
    \frac{d \delta J_i}{dt}&=h_S\delta S_i+h_I\delta I_i+h_J\delta J_i-d_{33}k_i^{(J)}\delta J_i.
\end{aligned}
\end{equation}

Letting $\bm x_i\coloneqq\left(\delta S_i, \delta I_i, \delta J_i\right)^T$, we rewrite the system in Equation~(\ref{eq:perturb_coinfect}) as
\begin{equation}\label{eq:rewritten-perturb}
    \frac{d \bm x_i}{dt}=\begin{pmatrix}
        f_S-d_{11}k_i^{(S)}&&f_I-d_{12}k_i^{(I)}&&f_J-d_{13}k_i^{(J)}\\
        g_S&& g_I-d_{22}k_i^{(I)}&&g_J\\
        h_S&&h_I&&h_J-d_{33}k_i^{(J)}
    \end{pmatrix}\bm x_i
\end{equation}
Let $\bm{M}$ be the matrix
\begin{equation*}
    \begin{pmatrix}
        f_S-d_{11}k_i^{(S)}-\lambda&&f_I-d_{12}k_i^{(I)}&&f_J-d_{13}k_i^{(J)}\\
        g_S&& g_I-d_{22}k_i^{(I)}-\lambda&&g_J\\
        h_S&&h_I&&h_J-d_{33}k_i^{(J)}-\lambda
    \end{pmatrix}.
\end{equation*}
We let $\bm x_i$ be of the form $\bm a\exp(ikx+\lambda t)$, where $\lambda$ is the growth rate. Substituting this ansatz into Equation~(\ref{eq:rewritten-perturb}), the growth rate satisfies $\det(\bm{M})=0$.

We define 
\begin{align*}
q_{11}&\coloneqq g_Ih_J-g_Jh_I, \quad & q_{22}&\coloneqq h_Jf_S-h_Sf_J,\\ 
q_{33}&\coloneqq f_Sg_I-f_Ig_S,\quad & q_{12}&\coloneqq h_Sg_J-g_Sh_J,\\ 
q_{13}&\coloneqq g_Sh_I-g_Ih_S, \quad & m_{11}&\coloneqq d_{11}k_i^{(S)},\\
m_{22}&\coloneqq d_{22}k_i^{(I)},\quad & m_{33}&\coloneqq d_{33}k_i^{(J)}, \\
m_{12}&\coloneqq d_{12}k_i^{(I)},\quad & m_{13}&\coloneqq d_{13}k_i^{(J)}.\\
\end{align*}

We also denote $e_i$ to be the $i$-th elementary symmetric polynomial and 
\begin{align*}
    p_1&\coloneqq f_S+g_I+h_J,\\
    p_2&\coloneqq q_{11}+q_{22}+q_{33},\\
    p_3&\coloneqq(f_Sg_Ih_J+f_Ig_Jh_S+f_Jg_Sh_I)\\
    &\quad-(f_Sg_Jh_I+f_Ig_Sh_J+f_Jg_Ih_S).
\end{align*}
Finally, we define
\[A(x_1,x_2,x_3,x_4,x_5)\coloneqq x_1f_S+x_2g_I+x_3h_J+x_4g_S+x_5h_S,\] and \[B(x_1,x_2,x_3,x_4,x_5)\coloneqq x_1q_{11}+x_2q_{22}+x_3q_{33}+x_4q_{12}+x_5q_{13}.\]

Then we let 
\begin{equation}\label{eq:def-p1}
    p(\lambda)\coloneqq-\det(M)=\lambda^3-b\lambda^2+c\lambda-d=0,
\end{equation}where 
\begin{equation}\label{eq:det-coeff1}
\begin{aligned}
    b&\coloneqq p_1-e_1(m_{11},m_{22},m_{33})\\
    c&\coloneqq p_2+e_2(m_{11},m_{22},m_{33})-p_1e_1(m_{11},m_{22},m_{33})\\
    &\quad+A(m_{11},m_{22},m_{33},m_{12},m_{13})\\
    d&\coloneqq p_3-B(m_{11},m_{22},m_{33},m_{12},m_{13})-e_3(m_{11},m_{22},m_{33})\\
    &\quad +A(m_{22}m_{33},m_{33}m_{11},m_{11}m_{22},-m_{12}m_{33},-m_{13}m_{22}).
\end{aligned}
\end{equation}

We denote the solutions to the system in Equation~(\ref{eq:det-coeff1}) to be $\lambda_1$, $\lambda_2$, and $\lambda_3$, where $\Re(\lambda_1)\ge \Re(\lambda_2)\ge \Re(\lambda_3)$. We prove the following two sets of necessary instability conditions:
\begin{prop}[Boundary conditions]\label{prop:instab1} \label{prop:instab1}
We must have
\begin{align*}
    p_1&<0, &\quad p_2&>0, \\
    p_3&<0, &\quad p_1p_2&<p_3.
\end{align*}
\end{prop}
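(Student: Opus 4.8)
The plan is to read Proposition~\ref{prop:instab1} as the three-morphogen analogue of the first two conditions in Theorem~\ref{thm:instability}: the requirement that the homogeneous equilibrium $(S^*,I^*,J^*)$ be linearly stable once diffusion is switched off --- the standing prerequisite for a diffusion-driven (Turing or Turing--Hopf) instability, hence necessary for the pattern-forming behavior studied here. The ``diffusion-off'' mode is the one in which the perturbation is identical on every node, i.e.\ the uniform eigenvector of each layer Laplacian, which is the formal limit $k_i^{(S)}=k_i^{(I)}=k_i^{(J)}=0$ in which every $m_{ab}$ vanishes. In that limit the matrix $\bm M$ of Equation~(\ref{eq:rewritten-perturb}) is exactly the reaction Jacobian, and the coefficients in Equations~(\ref{eq:def-p1})--(\ref{eq:det-coeff1}) collapse to $b=p_1$, $c=p_2$, $d=p_3$, because $e_1(0,0,0)=e_2(0,0,0)=e_3(0,0,0)=0$ and $A,B$ annihilate the all-zero argument. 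So the characteristic equation becomes
\[
p(\lambda)=\lambda^{3}-p_{1}\lambda^{2}+p_{2}\lambda-p_{3}=0,
\]
and it remains only to determine when this cubic has all roots in the open left half-plane.

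For that I would invoke the Routh--Hurwitz criterion for a real cubic $\lambda^{3}+a_{2}\lambda^{2}+a_{1}\lambda+a_{0}$: all three roots have negative real part if and only if $a_{2}>0$, $a_{0}>0$ and $a_{2}a_{1}>a_{0}$, in which case $a_{1}>a_{0}/a_{2}>0$ as well. Since $p_1$ (the trace of the Jacobian), $p_2$ (the sum of its $2\times2$ principal minors) and $p_3$ (its determinant) are real, non-real roots occur in conjugate pairs, so this is the correct stability notion. Substituting $a_{2}=-p_{1}$, $a_{1}=p_{2}$, $a_{0}=-p_{3}$ turns the three inequalities into $p_{1}<0$, $p_{3}<0$ and $p_{1}p_{2}<p_{3}$, and the induced $a_{1}>0$ into $p_{2}>0$ --- precisely the four displayed conditions. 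In fact $p_2>0$ follows from the other three, since dividing $p_1p_2<p_3<0$ by $p_1<0$ gives $p_2>p_3/p_1>0$; it is kept only for readability.

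Finally I would record the logical step that makes the inequalities \emph{necessary for instability}: if the reaction Jacobian itself had an eigenvalue with non-negative real part, growth would already occur on the uniform mode and the instability would not be diffusion-induced, contradicting the Turing/Turing--Hopf hypothesis. I do not expect a genuine obstacle here --- the argument is a one-line specialization of Equation~(\ref{eq:det-coeff1}) followed by the standard cubic Routh--Hurwitz test. The only points deserving a sentence of care are checking the collapse $b,c,d\mapsto p_1,p_2,p_3$ at $m_{ab}=0$, and the modeling remark that the $k_i\to 0$ limit legitimately stands in for ``no diffusion,'' so that Proposition~\ref{prop:instab1} really does encode stability of the homogeneous state.
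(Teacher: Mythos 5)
Your proposal is correct and follows essentially the same route as the paper: both reduce to the diffusion-free characteristic polynomial $\lambda^3-p_1\lambda^2+p_2\lambda-p_3$ and extract the sign conditions from the requirement that the homogeneous equilibrium be linearly stable. The only cosmetic difference is that the paper reads off $p_1<0$, $p_2>0$, $p_3<0$ via Vieta's formulas and reserves Routh--Hurwitz for $p_1p_2<p_3$, whereas you obtain all four (and the redundancy of $p_2>0$) directly from the cubic Routh--Hurwitz test.
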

\begin{proof}
Recall the previously stated definitions of $p_1$, $p_2$, and $p_3$. At equilibrium, there is no spatial diffusion, and the characteristic polynomial $p(\lambda)$ is 
\[\lambda^3-p_1\lambda^2+p_2\lambda-p_3.\]
All roots have negative real parts because no perturbations can grow into oscillations. By Vieta's formulas, we have 
\begin{align*}
    p_1&=\lambda_1+\lambda_2+\lambda_3<0,\\
    p_2&=\lambda_1\lambda_2+\lambda_2\lambda_3+\lambda_3\lambda_1>0,\\
    p_3&=\lambda_1\lambda_2\lambda_3<0.
\end{align*}
 Moreover, $p_1p_2<p_3$ follows from the Routh-Hurwitz criterion.
\end{proof}

The following definition differentiates between Turing and Turing-Hopf instability for systems of three interacting morphogens in this context. A similar definition for continuous domains is also stated in~\cite{Piskovsky2025}.
\begin{definition}\label{def:instab-roots}
    Turing instability occurs when every eigenvalue of $M$ with a positive real part is real for every Laplacian eigenvalue $k$. Turing-Hopf instability occurs when some eigenvalues with a positive real part are not real at some Laplacian eigenvalue $k$.
\end{definition}

This leads us to the following set of instability conditions:

\begin{prop}[Instability conditions I]\label{prop:instab2}
    We denote $\Delta_3$ to be the cubic discriminant $18bcd - 4b^3d + b^2c^2 - 4c^3 - 27d^2$. For Turing instability to occur, we have $c<0$ must be true under the condition that $\Delta_3=18bcd - 4b^3d + b^2c^2 - 4c^3 - 27d^2>0$ and both $c>0$ and $d>0$ must both be true under the condition that $\Delta_3<0$. For Turing-Hopf instability to occur, we must have $\Delta_3<0$ and $d<0$.
\end{prop}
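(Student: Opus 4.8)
The plan is to reduce the statement to elementary facts about the real cubic $p(\lambda)=\lambda^{3}-b\lambda^{2}+c\lambda-d$ of~(\ref{eq:def-p1}), whose roots $\lambda_{1},\lambda_{2},\lambda_{3}$ (ordered by real part) are the growth rates attached to a fixed Laplacian eigenvalue. By Vieta's formulas $b=\lambda_{1}+\lambda_{2}+\lambda_{3}$, $c=\lambda_{1}\lambda_{2}+\lambda_{2}\lambda_{3}+\lambda_{3}\lambda_{1}$, and $d=\lambda_{1}\lambda_{2}\lambda_{3}$. The first preliminary observation, used throughout, is that $b<0$ for every Laplacian eigenvalue: from~(\ref{eq:det-coeff1}) we have $b=p_{1}-e_{1}(m_{11},m_{22},m_{33})$, where $m_{11},m_{22},m_{33}\ge 0$ (each is a product of a nonnegative diffusion rate and a nonnegative node degree), while $p_{1}<0$ by Proposition~\ref{prop:instab1}; hence $b\le p_{1}<0$, i.e.\ the coefficient $-b$ of $\lambda^{2}$ is strictly positive. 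Equivalently, in any Descartes sign sequence for $p$ the first two entries are $+,+$.

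Next I would split on the sign of $\Delta_{3}$, which fixes the real/complex structure of the roots: $\Delta_{3}>0$ gives three distinct real roots; $\Delta_{3}<0$ gives one real root $\lambda_{r}$ together with a complex-conjugate pair $\mu\pm i\omega$ with $\omega\neq 0$; and $\Delta_{3}=0$ gives a repeated real root. Recall from Definition~\ref{def:instab-roots} that Turing instability means every eigenvalue with positive real part is real for all $k$, whereas Turing-Hopf instability means a non-real eigenvalue has positive real part at some $k$.

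For the Turing-Hopf claim the argument is short: a non-real eigenvalue can occur only when $\Delta_{3}<0$, so that condition is forced; writing the roots as $\lambda_{r}$ and $\mu\pm i\omega$ with $\mu>0$ by hypothesis, $\mu>0$ together with $b=\lambda_{r}+2\mu<0$ gives $\lambda_{r}<-2\mu<0$, hence $d=\lambda_{r}(\mu^{2}+\omega^{2})<0$; this yields $\Delta_{3}<0$ and $d<0$ (and, as a byproduct, $\lambda_{r}$ is then stable, so a Hopf pair and a positive real eigenvalue cannot coexist, consistent with Definition~\ref{def:instab-roots}). For the Turing claim I would case-split again. When $\Delta_{3}>0$ all roots are real and instability means $\lambda_{1}>0$; if two roots are positive, say $\lambda_{1},\lambda_{2}>0>\lambda_{3}$, then $b<0$ forces $\lambda_{3}<-(\lambda_{1}+\lambda_{2})$, whence $c=\lambda_{1}\lambda_{2}+\lambda_{3}(\lambda_{1}+\lambda_{2})<\lambda_{1}\lambda_{2}-(\lambda_{1}+\lambda_{2})^{2}<0$. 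When $\Delta_{3}<0$, a Turing instability with the Hopf pair stable means $\mu<0$ and $\lambda_{r}>0$, so $d=\lambda_{r}(\mu^{2}+\omega^{2})>0$ immediately; the bound $\lambda_{r}<-2\mu$ coming from $b<0$ is then to be combined with the $p_{1}p_{2}<p_{3}$ information of Proposition~\ref{prop:instab1}, propagated through~(\ref{eq:det-coeff1}), to control $c=2\mu\lambda_{r}+\mu^{2}+\omega^{2}$. The case $\Delta_{3}=0$ is a degenerate limit of the two above (all roots real, no Hopf) and contributes nothing new.

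The step I expect to be the main obstacle is precisely this last sign bookkeeping for $c$. In the $\Delta_{3}>0$ regime the clean conclusion $c<0$ is immediate only once one has excluded an instability carried by a \emph{single} positive real eigenvalue (the two-positive-roots subcase being handled above), and in the $\Delta_{3}<0$ regime $c>0$ is not a pure sign-counting consequence of $b<0$ alone but must be extracted from the quantitative boundary inequalities of Proposition~\ref{prop:instab1} together with the explicit forms of $b,c,d$ in~(\ref{eq:det-coeff1}). Organizing this so that every admissible sign pattern of $(c,d)$ is either realized by an instability of the claimed type or ruled out by the available constraints is the technical heart of the argument; the remainder is Vieta's formulas plus Descartes' rule of signs.
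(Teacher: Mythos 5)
Your overall strategy is the paper's: establish $b<0$ from $p_1<0$ and $m_{11},m_{22},m_{33}\ge 0$, split on the sign of $\Delta_3$ to fix the real/complex root structure, and read off the signs of $c$ and $d$ via Vieta's formulas and Definition~\ref{def:instab-roots}. Your Turing--Hopf argument ($\mu>0$ and $b=\lambda_r+2\mu<0$ force $\lambda_r<0$, hence $d=\lambda_r(\mu^2+\omega^2)<0$) and your two-positive-real-roots computation of $c<0$ coincide with the corresponding steps in the paper's proof.

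However, the proposal has a genuine gap, and you flag it yourself: you do not actually prove $c<0$ in the $\Delta_3>0$ subcase with a \emph{single} positive real root, nor $c>0$ in the $\Delta_3<0$ subcase with $\lambda_r>0$ and a stable complex pair; you defer both to an unexecuted combination of $p_1p_2<p_3$ with Equation~(\ref{eq:det-coeff1}). That is not how the paper closes these cases, and as written your argument is incomplete. The paper handles the first subcase ($x_1>0>y_1>z_1$) by writing $c=z_1(x_1+y_1)+x_1y_1$ and using $z_1<-(x_1+y_1)$ to bound $z_1(x_1+y_1)\le -(x_1+y_1)^2$, whence $c<x_1y_1-(x_1+y_1)^2=-(x_1^2+x_1y_1+y_1^2)<0$; it handles the second by locating $z_2$ relative to $-(x_2^2+y_2^2)/(2x_2)$ and reading off the signs of $c=x_2^2+y_2^2+2x_2z_2$ and $d=z_2(x_2^2+y_2^2)$. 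Neither step invokes $p_1p_2<p_3$, so the extra boundary inequalities you plan to import are not the missing ingredient. Be aware, though, that your instinct that this is the delicate point is sound: the paper's bound $z_1(x_1+y_1)\le-(x_1+y_1)^2$ is obtained by multiplying $z_1<-(x_1+y_1)$ by $(x_1+y_1)$ and therefore tacitly assumes $x_1+y_1>0$ (e.g.\ roots $1,-10,-11$ satisfy $b<0$ but give $c=89>0$), and similarly $x_2<0$, $z_2>0$ does not by itself force $x_2^2+y_2^2+2x_2z_2>0$. So to complete your proof along the stated lines you would either have to reproduce the paper's inequalities together with an argument that the problematic sign configurations cannot arise for the matrices $\bm{M}$ of this model, or restrict the claim accordingly; as it stands the proposal establishes the Turing--Hopf direction and only part of the Turing direction.
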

\begin{proof}
    Recall the definitions from Equations~(\ref{eq:def-p1}) and~(\ref{eq:det-coeff1}). Either all roots of $p(\lambda)$ are real or there is one real root and two complex roots.
    
    If the discriminant $\Delta_3 = 18bcd - 4b^3d + b^2c^2 - 4c^3 - 27d^2$ is greater than $0$, we have three distinct real roots, which we call $x_1$, $y_1$, and $z_1$. We assume without loss of generality that $x_1>y_1>z_1$. For any spatial instability to occur, there must be an nonzero number of positive roots because a perturbation must grow into an oscillation. Recall that $b=p_1-e_1(m_{11},m_{22},m_{33})$ and $m_{11}$, $m_{22}$, and $m_{33}$ are positive by definition. Then, because $p_1<0$ by Proposition~\ref{prop:instab1}, we have $b<0$, we have that $p(\lambda)$ must have least one negative root and $x_1<-(y_1+z_1)$. First, if $x_1>0$ and $0>y_1>z_1$, it follows that $c=z_1(x_1+y_1)+x_1y_1<x_1y_1-(x_1+y_1)^2<0$ and $d=x_1y_1z_1>0$. Second, if $x_1>y_1>0$ and $z_1<0$, then $c=x_1(y_1+z_1)+y_1z_1<y_1z_1-(y_1+z_1)^2<0$ and $d=x_1y_1z_1<0$. Thus, $c<0$ must be true if $\Delta>0$, assuming that spatial oscillations occur.

    If the discriminant $\Delta = 18bcd - 4b^3d + b^2c^2 - 4c^3 - 27d^2$ is less than $0$, there exist nonreal roots. Let these roots be $x_2+y_2i$, $x_2-y_2i$, and $z_2$, where $x_2$ and $y_2$ are positive, and $z_2$ are real numbers. By Vieta's formulas, we have $b=2x_2+z_2$, $c=x_2^2+y_2^2+2x_2z_2$, and $d=z_2(x_2^2+y_2^2)$. Recall that $b$ is always negative. For spatial oscillations (Turing or Turing-Hopf) to occur, at least one root must have a positive real part. Thus either $x_2>0$ and $z_2<0$, or $x_2<0$ and $z_2>0$. First, if $z_2<-\frac{x_2^2+y_2^2}{2x_2}$, we have $c<0$ and $d<0$. Second, when $-\frac{x_2^2+y_2^2}{2x_2}<z_2<0$, we have $c>0$ and $d<0$. Third, when $z_2>0$, we have $c>0$ and $d>0$. 

    It directly follows from Definition~\ref{def:instab-roots} that Turing instability can only occur when all roots are real or there are two complex roots with negative real parts. Turing-Hopf instability occurs when two roots are complex with positive real parts. The theorem statement thus directly follows from this definition and the analysis above.
\end{proof}

\subsection{Instability Analysis on a Single-Layer Network}\label{sec:instab-single}

We shall derive additional instability conditions for the special case of Subsection~\ref{sec:instab-multiplex} where all layers are identical, which collapses to a single layer network.

We denote $G_A$ to be the single layer network, and $\bm{L_A}=\bm{L}(G_A)$ as defined in Definition~\ref{def:laplacian}. Inspired by~\cite{othmer1971instability}, we express the perturbations $\left(\delta S_i,\delta I_i,\delta J_i\right)$ as \[\left(\sum_{v=1}^Nc_v^1e^{\lambda_v t}\phi_i^{(v)},\sum_{v=1}^Nc_v^2e^{\lambda_v t}\phi_i^{(v)},\sum_{v=1}^Nc_v^3e^{\lambda_v t}\phi_i^{(v)}\right),\] where $\mu_v$ is the $v$-th eigenvalue of $\bm{L}_A$ with corresponding eigenvector $\phi_v=\left(\phi_1^{(v)},\dots,\phi_N^{(v)}\right)^T$, and $\lambda_v$ is the growth rate of the $v$-th spatial mode. 

We let 
\begin{equation}\label{eq:matrix-def1}
\bm{N}\coloneqq\begin{pmatrix}
        f_S+d_{11}\mu_v&& f_I+d_{12}\mu_v&&f_J+d_{13}\mu_v\\g_S&&g_I+d_{22}\mu_v&&g_J\\
        h_S&&h_I&&h_J+d_{33}\mu_v
    \end{pmatrix}.\end{equation}

When we substitute the ansatzes in Equation~(\ref{eq:matrix-def1}) into the system in Equation~(\ref{eq:perturb1}), we have
\begin{equation*}
    \lambda_v\bm{y}_v=
    \bm{N}\bm{y}_v,
\end{equation*}
where $\bm{y}_v\coloneqq \begin{pmatrix}c_v^1&c_v^2&c_v^3\end{pmatrix}^T$.

We let $\bm{N}$ be the matrix
\[\begin{pmatrix}
        f_S+d_{11}\mu_v&& f_I+d_{12}\mu_v&&f_J+d_{13}\mu_v\\g_S&&g_I+d_{22}\mu_v&&g_J\\
        h_S&&h_I&&h_J+d_{33}\mu_v
    \end{pmatrix}.\]

Thus, the eigenvalue $\mu_v$ of $\bm{L}_A$ and eigenvalue $\lambda_v$ of $\bm{N}$ for node $v$ satisfy
\[\det(\bm{N})=0.\]

The characteristic polynomial is 
\begin{equation}\label{eq:char-poly-single}
    \lambda_v^3-b_1(\mu_v)\lambda_v^2+c_1(\mu_v)\lambda_v-d_1(\mu_v)=0,
\end{equation} 
where
\begin{equation}\label{eq:char-def-single}
\begin{aligned}
    b_1(\mu_v)&\coloneqq p_1+e_1(d_{11},d_{22},d_{33})\mu_v,\\
    c_1(\mu_v)&\coloneqq p_2\\
    &\quad+\big[p_1e_1(d_{11},d_{22},d_{33})-A(d_{11},d_{22},d_{33},d_{12},d_{13})\big]\mu_v\\
    &\quad+e_2(d_{11},d_{22},d_{33})\mu_v^2,\\\
    &\quad\\
    d_1(\mu_v)&\coloneqq p_3+B(d_{11},d_{22},d_{33},d_{12},d_{13})\mu_v\\
    &\quad+A(d_{11}d_{22},d_{22}d_{33},d_{33}d_{11},-d_{12}d_{33},-d_{13}d_{22})\mu_v^2\\
    &\quad+e_3(d_{11},d_{22},d_{33})\mu_v^3.
\end{aligned}
\end{equation}

For simplicity, let 
\begin{equation}\label{eq:single-def1}
\begin{aligned}
A_1&\coloneqq A(d_{11},d_{22},d_{33},d_{12},d_{13}),\\
A_2&\coloneqq A(d_{22}d_{33},d_{33}d_{11},d_{11}d_{22},-d_{12}d_{33},-d_{13}d_{22}),\\
B_1&\coloneqq B(d_{11},d_{22},d_{33},d_{12},d_{13}).
\end{aligned}
\end{equation}
Then, we denote
\begin{equation}\label{eq:single-def2}
\begin{aligned}
b_0 &\coloneqq p_2, \quad & b_1 &\coloneqq A_1-p_1e_1(d_{11},d_{22},d_{33}), \\
b_2 &\coloneqq e_2(d_{11},d_{22},d_{33}), \quad & \tilde{a}_0 &\coloneqq  p_3, \\
\tilde{a}_1 &\coloneqq -B_1, \quad & \tilde{a}_2 &\coloneqq A_2, \\
\tilde{a}_3 &\coloneqq -e_3(d_{11},d_{22},d_{33}).
\end{aligned}
\end{equation}

Finally, note that $c_2(\phi_v)c_1(\phi_v)-c_0(\phi_v)=a_3\phi_v^3+a_2\phi_v^2+a_1\phi_v+a_0$, where 
\begin{equation}\label{eq:single-def3}
\begin{aligned}
    a_0&\coloneqq p_1p_2-p_3\\
    a_1&\coloneqq p_1A_1+B_1-p_2e_1(d_{11},d_{22},d_{33})-p_1^2e_1(d_{11},d_{22},d_{33})\\
    a_2&\coloneqq p_1e_2(d_{11},d_{22},d_{33})+p_1e_1^2(d_{11},d_{22},d_{33})\\
    &\quad-e_1(d_{11},d_{22},d_{33})A_1-A_2\\
    a_3&\coloneqq e_3(d_{11},d_{22},d_{33})\\
    &\quad-e_1(d_{11},d_{22},d_{33})e_2(d_{11},d_{22},d_{33}).
\end{aligned}
\end{equation}

In this scenario, Propositions~\ref{prop:instab1} and~\ref{prop:instab2} still hold. We prove the following proposition, which holds specifically for the case where the multiplex network layers are identical.

\begin{prop}[Instability conditions II]\label{prop:instab-3}
Consider the following sets of inequalities:
\begin{equation}\label{eq:polynom_cond1}
\begin{aligned}
    0&<a_2^2-3a_1a_3,\\
    0&<a_2+\sqrt{a_2^2-3a_1a_3},\\
    0&<2a_2^3+2(a_2^2-3a_1a_3)^{3/2}-9a_1a_2a_3+27a_0a_3^2,\\
\end{aligned}
\end{equation}
and
\begin{equation}\label{eq:polynom_cond2}
\begin{aligned}
    b_1&<-\sqrt{4b_2b_0},\\
    3a_3(b_1+\sqrt{b_1^2-4b_2b_0})&\leq2b_0(a_2+\sqrt{a_2^2-3a_1a_3}),\\
    2b_0(a_2+\sqrt{a_2^2-3a_1a_3})&\leq3a_3(b_1-\sqrt{b_1^2-4b_2b_0}),\\
    g\left(\frac{-b_1-\sqrt{b_1^2-4b_2b_0}}{2b_2}\right)&\leq0,\\
    g\left(\frac{-b_1+\sqrt{b_1^2-4b_2b_0}}{2b_2}\right)&\leq0,\\
\end{aligned}
\end{equation}
where $g(y)\coloneqq b_2y^2+b_1y+b_0$.

    A Turing-Hopf instability in the system defined above occurs if and only if all inequalities in the system represented by Equation~(\ref{eq:polynom_cond1}) are satisfied and at least one inequality in the system represented by Equation~(\ref{eq:polynom_cond2}) is not satisfied. A Turing instability occurs if and only if a Turing-Hopf instability does not occur and all inequalities in the system represented by Equation~(\ref{eq:polynom_cond1}) are satisfied.
\end{prop}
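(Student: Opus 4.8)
The plan is to reduce the linear stability of each spatial mode to the Routh--Hurwitz test applied to the growth-rate cubic~\eqref{eq:char-poly-single}, to isolate the two geometrically distinct ways that cubic can acquire a root on the imaginary axis, and then to recognize the two displayed inequality systems as the coefficient-level encodings of ``some admissible Laplacian eigenvalue drives a conjugate pair into the open right half-plane'' (Turing--Hopf) versus ``only a real root can get there'' (Turing).

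First I would substitute $\nu:=-\mu_v\ge 0$; this is legitimate because the spectrum of $\bm{L}_A$ is nonpositive, $\bm{L}_A$ being the negative combinatorial Laplacian. In this variable the three coefficients of~\eqref{eq:char-poly-single} become a degree-one polynomial $\hat b(\nu)$ with $\hat b(0)=p_1$ and leading coefficient $-e_1(d_{11},d_{22},d_{33})$, the quadratic $g(\nu)=b_2\nu^2+b_1\nu+b_0$ from~\eqref{eq:single-def2}, and a cubic $\hat d(\nu)$ with $\hat d(0)=p_3$; moreover the Hurwitz quantity $\Delta(\nu):=\hat b(\nu)g(\nu)-\hat d(\nu)$ is exactly the cubic $a_3\nu^3+a_2\nu^2+a_1\nu+a_0$ with coefficients~\eqref{eq:single-def3}, which is the polynomial identity recorded just before the proposition. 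Since $p_1<0$ and the $d_{ii}$ are nonnegative, $\hat b(\nu)<0$ for all $\nu\ge 0$, so by the Routh--Hurwitz criterion for cubics, mode $\nu$ is asymptotically stable precisely when $\hat d(\nu)<0$ and $\Delta(\nu)<0$; Proposition~\ref{prop:instab1} guarantees both at $\nu=0$, so an instability requires one of them to reverse at some admissible $\nu>0$.

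Next I would classify the type. Writing the roots of~\eqref{eq:char-poly-single} as $\lambda_1\in\mathbb{R}$ together with $\sigma\pm i\omega$ in the nonreal case, Vieta's formulas give $\hat d(\nu)=\lambda_1(\sigma^2+\omega^2)$ and, after a short expansion, $\Delta(\nu)=2\sigma[(\lambda_1+\sigma)^2+\omega^2]$; hence $\operatorname{sign}(\sigma)=\operatorname{sign}(\Delta(\nu))$ and $\operatorname{sign}(\lambda_1)=\operatorname{sign}(\hat d(\nu))$. A root lies on the imaginary axis only as $\lambda=0$ (iff $\hat d(\nu)=0$) or as $\lambda=\pm i\sqrt{g(\nu)}$ with $g(\nu)>0$ (iff $\Delta(\nu)=0$ and $g(\nu)>0$). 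Starting from the stable state at $\nu=0$, a growth rate therefore enters the open right half-plane either as a real root, through a sign change of $\hat d$, or as a nonreal conjugate pair, through a sign change of $\Delta$ at a value of $\nu$ where $g>0$; conversely each such crossing produces the corresponding unstable mode just beyond it. By Definition~\ref{def:instab-roots}, a Turing--Hopf instability occurs iff the second mechanism fires for some admissible $\nu$, i.e.\ iff $\Delta(\nu)>0$ at some $\nu$ with $g(\nu)>0$; if instead $\Delta$ is positive only where $g\le 0$, every unstable mode carries a real positive root and the instability is purely Turing.

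It remains to convert these two statements into the stated inequalities. Since $\Delta(0)=p_1p_2-p_3<0$ (Proposition~\ref{prop:instab1}) and the leading coefficient $a_3=e_3(d_{11},d_{22},d_{33})-e_1(d_{11},d_{22},d_{33})\,e_2(d_{11},d_{22},d_{33})$ is negative (AM--GM, using $d_{11},d_{22},d_{33}>0$), the cubic $\Delta$ tends to $-\infty$, so it is positive somewhere on $\nu>0$ iff it has a local maximum there with positive value: that is precisely system~\eqref{eq:polynom_cond1}, in which $a_2^2-3a_1a_3>0$ is the existence of real critical points, $a_2+\sqrt{a_2^2-3a_1a_3}>0$ places the local maximum at positive $\nu$, and the last inequality, which equals $27a_3^2$ times the value of $\Delta$ at that local maximum, demands that this value be positive. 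System~\eqref{eq:polynom_cond2} then isolates the Hopf subcase among instabilities: $\tfrac{-b_1\pm\sqrt{b_1^2-4b_2b_0}}{2b_2}$ are the zeros of $g$, $b_1<-\sqrt{4b_2b_0}$ says they are real and positive so that $g$ is negative on a genuine interval of admissible $\nu$, and the remaining inequalities position the local maximum of $\Delta$ relative to that interval and evaluate the relevant polynomials at its endpoints; if all five hold then $\{\Delta>0\}$ is forced inside $\{g\le 0\}$, no conjugate pair can cross, and the instability is Turing, whereas if one fails a genuine Hopf crossing survives and the instability is Turing--Hopf. Collecting the equivalences gives the proposition. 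The main obstacle is precisely this last step: a cubic with $\Delta(\nu)>0$ and $\hat d(\nu)<0$ can have either a conjugate pair or two real roots in the open right half-plane, and which case occurs varies along the spectrum as $\operatorname{sign}(g(\nu))$ and $\operatorname{sign}(\hat d(\nu))$ change; showing that the five inequalities of~\eqref{eq:polynom_cond2}, built only from evaluations at the two zeros of $g$ and from the critical data of $\Delta$, exactly separate these two possibilities is the computational heart of the argument, and the part most exposed to sign conventions, degenerate cases (double roots, $b_2=0$, or the critical $\nu$ lying outside the actual network spectrum), and the bookkeeping tying $\hat d$ back to $\hat b$, $g$, and $\Delta$.
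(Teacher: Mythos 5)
Your opening reduction is exactly the paper's: substitute $\phi_v=-\mu_v\ge 0$ (using that $\bm{L}_A$ is the negative combinatorial Laplacian), identify $c_2(\phi_v)$, $c_1(\phi_v)=g(\phi_v)$, $c_0(\phi_v)$, and the Hurwitz cubic $c_2(\phi_v)c_1(\phi_v)-c_0(\phi_v)=a_3\phi_v^3+a_2\phi_v^2+a_1\phi_v+a_0$. At that point the paper stops: its entire proof consists of checking that these coefficient polynomials satisfy the hypotheses of Theorem~1 of~\cite{Piskovsky2025} and importing the conclusion. You instead go on to attempt a self-contained proof of that theorem, which is a genuinely different route; several of your intermediate steps are correct and worth recording, namely the Routh--Hurwitz reduction (a mode is stable iff $c_0<0$ and $c_2c_1-c_0<0$, given $c_2<0$), the identity $c_2c_1-c_0=2\sigma\bigl[(\lambda_1+\sigma)^2+\omega^2\bigr]$ for roots $\lambda_1,\sigma\pm i\omega$, the sign $a_3<0$, and the reading of \eqref{eq:polynom_cond1} as the statement that the Hurwitz cubic attains a positive value at a local maximum located at positive $\phi$.

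However, the argument does not close, and the gap is not only the computational verification you flag at the end. Your classification step asserts that Turing--Hopf instability occurs iff $c_2c_1-c_0>0$ at some mode where $g>0$. Only one direction is sound: $c_2c_1-c_0>0$ and $c_1>0$ together with $c_2<0$ do force a nonreal pair with positive real part, since three real roots with $e_1<0$ and $e_2>0$ satisfy $(x+y)(y+z)(z+x)\le 0$. The converse fails: a complex pair can have positive real part at a mode where $c_1\le 0$. Take roots $-10$ and $1\pm i$, so $c_2=-8<0$, $c_1=-18<0$, $c_0=-20<0$, yet $c_2c_1-c_0=164>0$ and the unstable eigenvalues are nonreal; by Definition~\ref{def:instab-roots} such a mode is Turing--Hopf, while your criterion (and hence your reading of \eqref{eq:polynom_cond2}) would label it Turing. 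Either such configurations must be excluded using the specific dependence of $c_1$ and $c_2c_1-c_0$ on $\phi_v$, or the role of \eqref{eq:polynom_cond2} must be re-derived; as written, the equivalence between ``some inequality of \eqref{eq:polynom_cond2} fails'' and ``some mode has a nonreal unstable eigenvalue'' is precisely the content of the cited Theorem~1 and is not established by your sketch. A second, smaller point you already note: both your argument and the paper's treat $\phi_v$ as ranging over a continuum, whereas the admissible values are the finitely many Laplacian eigenvalues of the given network.
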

\begin{proof} Recall the definitions in Equations~(\ref{eq:char-poly-single}), (\ref{eq:char-def-single}), (\ref{eq:single-def1}), (\ref{eq:single-def2}), and (\ref{eq:single-def3}).
Let $\phi_v\coloneqq-\mu_v$ for every $v$. Then the characteristic polynomial is equivalent to $\lambda_v^3-c_2(\phi_v)\lambda_v^2+c_1(\phi_v)\lambda_v-c_0(\phi_v)=0$, where 
 \begin{align*}
 c_2(\phi_v)&\coloneqq -e_1(d_{11},d_{22},d_{33})\phi_v+p_1,\\
c_1(\phi_v)&= b_2\phi_v^2+b_1\phi_v+b_0,\\
c_0(\phi_v)&= \tilde{a_3}\phi_v^3+\tilde{a_2}\phi_v^2+\tilde{a_1}\phi_v+\tilde{a_0}.
\end{align*}
It is well known that the eigenvalues $\mu_v$ are all nonpositive and $0\in \{\mu_1,\dots,\mu_N\}$. Thus $\phi_v$ is interchangable with $k^2$ in~\cite{Piskovsky2025}. Verifying that the other assumptions used in~\cite{Piskovsky2025} on the coefficients of $c_2(\phi_v)$, $c_1(\phi_v)$, and $c_0(\phi_v)$ are all true for our definitions above, we conclude that Proposition~\ref{prop:instab-3} follows from Theorem~$1$ in~\cite{Piskovsky2025}.
\end{proof}

\section{Four-State Instability Analysis}\label{sec:four-state}

We now derive the conditions for Turing and Turing-Hopf instability in a reaction-diffusion system for four distinct morphogens on a four-layer multiplex network, with diffusion occurring on only three layers. We consider the following system with morphogens $S$, $I$, $J$, and $C$, where $S_i$, $I_i$, $J_i$, and $C_i$ are the densities of the morphogens in each node of the network. Because we consider diffusion on only the first three layers, the conditions discussed below apply to the MBRD-CI model in Equation~\ref{eq:coinfect_model}.

\begin{align*}
    \frac{d S_i}{dt}&=f(S_i,I_i,J_i,C_i)\\
    &\quad+d_{11}\sum_{j=1}^nL_{ij}^{(S)}S_j+d_{12}\sum_{j=1}^nL_{ij}^{(I)}I_j+d_{13}\sum_{j=1}^nL_{ij}^{(J)}J_j,\\
    \frac{d I_i}{dt}&=g(S_i,I_i,J_i,C_i)+d_{22}\sum_{j=1}^nL_{ij}^{(I)}I_j,\\
    \frac{d J_i}{dt}&=h(S_i,I_i,J_i,C_i)+d_{33}\sum_{j=1}^nL_{ij}^{(J)}J_j,\\
    \frac{d C_i}{dt} &= l(S_i,I_i,J_i,C_i).
    .
\end{align*}

Let $(S^*,I^*,J^*,C^*)$ be the steady state densities on all nodes. We define $f_S$ to be $\frac{\partial f}{\partial S}|_{(S^*,I^*,J^*,C^*)}$ and $f_I$, $f_J$, $f_C$, $g_S$, $g_I$, $g_J$, $g_C$, $h_S$, $h_I$, $h_J$, $h_C$, $l_S$, $l_I$, $l_J$, and $l_C$ similarly. We introduce a perturbation $\left(\delta S_i,\delta I_i,\delta J_i\right)$ to the equilibrium densities. Then, by multinomial Taylor expansions, we have
\begin{equation}\label{eq:perturb1_coinfect}
\begin{aligned}
    \frac{d \delta S_i}{dt}&=f_S\delta S_i+f_I\delta I_i+f_J\delta J_i+f_C\delta C_i+d_{11}\sum_{j=1}^nL_{ij}^{(S)}\delta S_j\\&\quad +d_{12}\sum_{j=1}^nL_{ij}^{(I)}\delta I_j+d_{13}\sum_{j=1}^nL_{ij}^{(J)}\delta J_j,\\
    \frac{d \delta I_i}{dt}&=g_S\delta S_i+g_I\delta I_i+g_J\delta J_i+g_C\delta C_i+d_{22}\sum_{j=1}^nL_{ij}^{(I)}\delta I_j\\
    \frac{d \delta J_i}{dt}&=h_S\delta S_i+h_I\delta I_i+h_J\delta J_i+h_C\delta C_i+d_{33}\sum_{j=1}^nL_{ij}^{(J)}\delta J_j,\\
    \frac{\delta C_i}{dt}&=l_S\delta S_i+l_I\delta I_i+l_J\delta J_i+l_C\delta C_i.
\end{aligned}
\end{equation}

We approximate this system as follows: 
\begin{equation}\label{eq:perturb}
\begin{aligned}
    \frac{d \delta S_i}{dt}&=f_S\delta S_i+f_I\delta I_i+f_J\delta J_i+f_C\delta C_i\\
    &\quad-d_{11}k_i^{(S)}\delta S_i-d_{12}k_i^{(I)}\delta I_i-d_{13}k_i^{(J)} \delta J_i,\\
    \frac{d \delta I_i}{dt}&=g_S\delta S_i+g_I\delta I_i+g_J\delta J_i+g_C\delta C_i-d_{22}k_i^{(I)}\delta I_i\\
    \frac{d \delta J_i}{dt}&=h_S\delta S_i+h_I\delta I_i+h_J\delta J_i+h_C\delta C_i-d_{33}k_i^{(J)}\delta J_i,\\
    \frac{d\delta C_i}{dt}&=l_S\delta S_i+l_I\delta I_i+l_J\delta J_i+l_C\delta C_i.
\end{aligned}
\end{equation}

Letting $\bm w_i\coloneqq\left(\delta S_i, \delta I_i, \delta J_i,\delta C_i\right)^T$, we rewrite the system in Equation~(\ref{eq:perturb}) as
\begin{equation}\label{eq:rewritten-perturb_coinfect}
    \frac{d \bm w_i}{dt}=\begin{pmatrix}
        f_S-d_{11}k_i^{(S)}&&f_I-d_{12}k_i^{(I)}&&f_J-d_{13}k_i^{(J)} && f_C\\
        g_S && g_I-d_{22}k_i^{(I)}&&g_J&&g_C\\
        h_S&&h_I&&h_J-d_{33}k_i^{(J)} && h_C\\
        l_S && l_I&&l_J&&l_C
    \end{pmatrix}\bm w_i.
\end{equation}
Recall the definitions of $m_{11}$, $m_{22}$, $m_{33}$, $m_{12}$, and $m_{13}$ from the previous section. Let $\bm{P}$ be the matrix
\begin{equation*}
    \begin{pmatrix}
        f_S-m_{11}-\lambda&&f_I-m_{12}&&f_J-m_{13} && f_C\\
        g_S&& g_I-m_{22}-\lambda&&g_J && g_C\\
        h_S&&h_I&&h_J-m_{33}-\lambda && h_C\\
        l_S&&l_I&&l_J&&l_C-\lambda
    \end{pmatrix}.
\end{equation*}
We let $\bm w_i$ be of the form $\bm a\exp(ikx+\lambda t)$, where $\lambda$ is the growth rate. Substituting this ansatz into Equation~(\ref{eq:rewritten-perturb}), the growth rate satisfies $\det(\bm{P})=0$.

We define 
\begin{align*}
u_1&\coloneq m_{11}(l_Ig_Ch_J-l_Ig_Jh_C+l_Jg_Ih_C-l_Jg_Ch_I)\\
&\quad +m_{22}(l_Sf_Ch_J-l_Sf_Jh_C+l_Jf_Sh_C-l_Jf_Ch_S)\\
&\quad +m_{33}(l_Sg_If_C-l_Sf_Ig_C+l_Ig_Ch_J-l_Ig_Jh_C)\\
&\quad +m_{12}(l_Sg_Jh_C-l_Sg_Ch_J+l_Jh_Sg_C-l_Jg_Sh_C)\\
&\quad +m_{13}(l_Sh_Ig_C-l_Sg_Ih_C+l_Ih_Cg_S-l_Ig_Ch_S),\\
u_2&\coloneqq l_Sf_Cm_{22}m_{33}+l_Ig_Cm_{11}m_{33}+l_Jh_Cm_{11}m_{22},\\
u_3&\coloneqq l_S(f_Ig_C-g_If_C+f_Jh_C-f_Ch_J)\\
&\quad +l_I(g_Jh_C-g_Ch_J+g_Sh_C-g_Ch_S)\\
&\quad +l_J(g_Ch_I-g_Ih_C+f_Ch_S-f_Sh_C),\\
u_4&\coloneqq m_{11}(l_Ig_C+l_Jh_C)+m_{22}(l_Sf_C+l_Jh_C)\\
&\quad +m_{33}(l_Sf_C+l_Ig_C)-l_S(g_Cm_{12}+h_Cm_{13}),\\
u_5&\coloneqq l_Sf_C+l_Ig_C+l_Jh_C.
\end{align*}

Recall the definitions of $e_i$, $p_1$, $p_2$, and $p_3$ from the previous section. We denote 
\begin{align*}
r_1&\coloneqq l_c-p_1,\\
r_2&\coloneqq u_5+l_cp_1-p_2,\\
 r_3&\coloneqq u_3+u_4+l_Cp_2-p_3,\\
    r_4&\coloneqq \det\begin{pmatrix}
        f_S&&f_I&&f_J&&f_C\\
        g_S&&g_I&&g_J&&g_C\\
        h_S&&h_I&&h_J&&h_C\\
        l_S&&l_I&&l_J&&l_C
    \end{pmatrix}.
\end{align*}
Recall the definitions of functions $b$, $c$, and $d$ from the previous section. Then we let 
\begin{equation}\label{eq:def-p}
    r(\lambda)\coloneqq\det(\bm{P})=\lambda^4-a'\lambda^3+b'\lambda^2-c'\lambda+d'=0,
\end{equation}where 
\begin{equation}\label{eq:det-coeff}
\begin{aligned}
a'&\coloneqq l_C-b,\\
    b'&\coloneqq u_5+l_C b-c,\\
    c'&\coloneqq u_3+u_4+l_Cc-d,\\
    d'&\coloneqq p_4+u_1-u_2+l_Cd.
\end{aligned}
\end{equation}

We denote the solutions to the system in Equation~(\ref{eq:det-coeff}) to be $\lambda_1$, $\lambda_2$, and $\lambda_3$, where $\Re(\lambda_1)\ge \Re(\lambda_2)\ge \Re(\lambda_3)$. We prove the following two sets of necessary instability conditions:
\begin{prop}[Boundary conditions]\label{prop:instab1_coinfect}
For instability to occur, we must have
\begin{align*}
    r_4&>0, &\quad r_3&<0, \\
    r_2&>0, &\quad r_1&<0,\\
    r_1r_4&>r_2r_3,&\quad r_1r_2r_3&>r_3^2+r_1^2r_4.
\end{align*}
\end{prop}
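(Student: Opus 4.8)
The plan is to mirror the argument of Proposition~\ref{prop:instab1} exactly, only now for a degree-four characteristic polynomial. First I would observe that at equilibrium (when every $k_i^{(\bullet)}=0$, so all $m$'s vanish) the matrix $\bm{P}$ reduces to the full Jacobian $J^*$ of the four-species reaction kinetics, and its characteristic polynomial is $\lambda^4 - r_1\lambda^3 + r_2\lambda^2 - r_3\lambda + r_4$. This requires checking that the coefficients of $\det(\bm P)$ specialize correctly: with all $m$'s set to zero we have $b\to p_1$, $c\to p_2$, $d\to p_3$, and $u_1=u_2=0$, $u_3+u_4\to$ (the relevant minors), $u_5\to l_Sf_C+l_Ig_C+l_Jh_C$, so $a'\to l_C-p_1 = r_1$, $b'\to u_5+l_Cp_1-p_2 = r_2$, $c'\to u_3+u_4+l_Cp_2-p_3 = r_3$, and $d'\to p_4+0-0+l_Cp_3$; here one identifies $p_4 + l_Cp_3$ with $r_4 = \det J^*$ by cofactor expansion along the last row. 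I would state this identification as the opening line.

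Next, because the equilibrium must be stable in the absence of diffusion (no perturbation grows into an oscillation before diffusion is switched on), all four roots $\lambda_1,\dots,\lambda_4$ of $\lambda^4 - r_1\lambda^3 + r_2\lambda^2 - r_3\lambda + r_4$ have negative real part. Applying Vieta's formulas to a degree-four polynomial with all roots in the open left half-plane gives the sign pattern directly: $r_1 = \sum\lambda_i < 0$ (sum of negative-real-part numbers, and complex roots come in conjugate pairs so the sum is real and negative), $r_2 = \sum_{i<j}\lambda_i\lambda_j > 0$, $r_3 = \sum_{i<j<k}\lambda_i\lambda_j\lambda_k < 0$, and $r_4 = \prod\lambda_i > 0$. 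I would spell out the elementary-symmetric-function sign argument for the quartic case (grouping roots into real ones and conjugate pairs and checking each symmetric function is real with the claimed sign), since for degree four it is slightly less transparent than for degree three.

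Finally, the two remaining inequalities $r_1r_4 > r_2r_3$ and $r_1r_2r_3 > r_3^2 + r_1^2 r_4$ are exactly the Routh--Hurwitz conditions for a quartic $\lambda^4 + a_1\lambda^3 + a_2\lambda^2 + a_3\lambda + a_4$ with $a_1=-r_1$, $a_2=r_2$, $a_3=-r_3$, $a_4=r_4$: the standard Routh--Hurwitz requirements are $a_1>0$, $a_1a_2 - a_3 > 0$, $a_3(a_1a_2 - a_3) - a_1^2 a_4 > 0$, and $a_4>0$. Translating, $a_1a_2 - a_3 = -r_1r_2 + r_3 > 0$ rearranges to $r_1 r_2 < r_3$ — but the stated condition is $r_1 r_4 > r_2 r_3$, so I would need to be careful here: the paper's listed inequalities may correspond to a slightly different packaging of the Hurwitz minors, and I would verify which determinant of the Hurwitz matrix each one is. Concretely, I would write out the $4\times 4$ Hurwitz matrix, compute its three nontrivial leading principal minors $\Delta_1,\Delta_2,\Delta_3$, substitute $a_1=-r_1$ etc., and match the resulting expressions to $r_1r_4>r_2r_3$ and $r_1r_2r_3>r_3^2+r_1^2r_4$ (one of these is $\Delta_3>0$ after multiplying through by a positive quantity; the other is likely $\Delta_2>0$ or a combination). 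This bookkeeping — pinning down exactly which Hurwitz minor produces each stated inequality and confirming no sign error creeps in from the alternating signs in $r(\lambda)$ — is the main obstacle; everything else is routine once the equilibrium-Jacobian identification is in place.
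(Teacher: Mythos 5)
Your proposal is correct and follows essentially the same route as the paper: identify the diffusion-free characteristic polynomial $\lambda^4-r_1\lambda^3+r_2\lambda^2-r_3\lambda+r_4$ as that of the equilibrium Jacobian, use stability of the equilibrium to place all four roots in the open left half-plane, read off the four sign conditions from Vieta's formulas, and obtain the last two inequalities from the Routh--Hurwitz criterion. The bookkeeping issue you flagged resolves cleanly: writing $a_1=-r_1$, $a_2=r_2$, $a_3=-r_3$, $a_4=r_4$ (all positive by the sign conditions), the third Hurwitz minor $a_1a_2a_3-a_3^2-a_1^2a_4>0$ is exactly $r_1r_2r_3>r_3^2+r_1^2r_4$, and $r_1r_4>r_2r_3$ is the statement $a_2a_3>a_1a_4$, which follows from that minor together with $a_1,a_3>0$ rather than being an independent Hurwitz determinant.
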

\begin{proof}
Recall the previously stated definitions of $r_1$, $p_2$, $r_3$, and $r_4$. At equilibrium, there is no spatial diffusion, and the characteristic polynomial $r(\lambda)$ is 
\[\lambda^4-r_1\lambda^3+r_2\lambda^2-r_3\lambda+r_4.\]
All roots have negative real parts because no perturbations can grow into oscillations. By Vieta's formulas, we have 
\begin{align*}
    p_1&=\lambda_1+\lambda_2+\lambda_3+\lambda_4<0,\\
    p_2&=\sum_{cyc}\lambda_1\lambda_2>0,\\
    p_3&=\sum_{cyc}\lambda_1\lambda_2\lambda_3<0,\\
    p_4&=\lambda_1\lambda_2\lambda_3\lambda_4>0.
\end{align*}
 Moreover, the conditions $r_1r_4>r_2r_3$ and $r_1r_2r_3>r_3^2+r_1^2r_4$ follow from the Routh-Hurwitz criterion.
\end{proof}

\begin{prop}[Instability conditions]\label{prop:instab1_coinfect2}
    Let $\Delta_4$ be the quartic discriminant of $r(\lambda)$. If $\Delta_4>0$, then $4\lambda^3-3a'\lambda^2+2b'\lambda-c'$ must have three real roots alternating in sign, for Turing instability to occur and $d'>0$ is a necessary condition for Turing-Hopf instability.
\end{prop}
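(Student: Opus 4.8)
The plan is to exploit the trichotomy of the quartic discriminant: when $\Delta_4>0$ the four roots of $r(\lambda)=\lambda^4-a'\lambda^3+b'\lambda^2-c'\lambda+d'$ are either all real and distinct, or they split into two distinct complex-conjugate pairs — the intermediate possibility (two real roots plus one conjugate pair) is exactly the case $\Delta_4<0$, and repeated roots force $\Delta_4=0$. I would treat the Turing and Turing--Hopf claims as the two mutually exclusive faces of this dichotomy, using the four-morphogen analogue of Definition~\ref{def:instab-roots}: an instability means some eigenvalue of $\bm P$, i.e.\ some root of $r(\lambda)$, has positive real part; it is a Turing instability if every such root is real, and a Turing--Hopf instability if some such root is non-real. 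A preliminary remark to record is that, at a fixed Laplacian eigenvalue, the eigenvalues of $\bm P$ are precisely the roots of $r(\lambda)$, so the two phrasings coincide.

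For the Turing claim, suppose a Turing instability occurs while $\Delta_4>0$. Instability supplies a root $\lambda_*$ of $r$ with $\Re(\lambda_*)>0$, and the Turing condition forces $\lambda_*\in\mathbb R$. Since $\Delta_4>0$ excludes the ``two real plus one conjugate pair'' configuration, the presence of a real root also excludes the ``two conjugate pairs'' configuration, so all four roots $\rho_1<\rho_2<\rho_3<\rho_4$ of $r$ are real and distinct. Rolle's theorem applied to $r$ then yields exactly one root of $r'(\lambda)=4\lambda^3-3a'\lambda^2+2b'\lambda-c'$ in each interval $(\rho_1,\rho_2)$, $(\rho_2,\rho_3)$, $(\rho_3,\rho_4)$; call these $c_1<c_2<c_3$, so $r'$ has three real roots. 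Because $r$ has positive leading coefficient, it is negative on $(\rho_1,\rho_2)$, positive on $(\rho_2,\rho_3)$, and negative on $(\rho_3,\rho_4)$, hence $r(c_1),r(c_2),r(c_3)$ carry signs $-,+,-$ — the asserted alternation of sign at the critical points. I would also note the converse: given three real roots of $r'$ at which $r$ alternates in sign, the intermediate value theorem recovers four real roots of $r$, so these conditions are equivalent to $r$ having only real roots and nothing is lost.

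For the Turing--Hopf claim I would run the same dichotomy in the opposite direction. A Turing--Hopf instability supplies a non-real root $\lambda_*$ of $r$ with $\Re(\lambda_*)>0$; under $\Delta_4>0$ the existence of a non-real root forces all four roots to be non-real, i.e.\ two conjugate pairs $z,\bar z,w,\bar w$, and since $\Delta_4>0$ forbids repeated roots we have $z,w\notin\mathbb R$ (in particular $z\neq0$, $w\neq0$). Vieta's formulas for $r(\lambda)=\lambda^4-a'\lambda^3+b'\lambda^2-c'\lambda+d'$ then give $d'=z\bar z\,w\bar w=|z|^2|w|^2>0$, which is the claim.

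The routine ingredients — the discriminant trichotomy, Rolle's theorem, Vieta's formulas — are standard, and the strict inequality $\Delta_4>0$ removes all degenerate repeated-root cases, so no multiplicity bookkeeping is needed. I expect the main point worth spelling out carefully to be the correct invocation of the instability classification (Definition~\ref{def:instab-roots}) in the four-morphogen setting, where it has not been restated; the conceptual heart is simply that, under $\Delta_4>0$, Turing and Turing--Hopf instabilities are the ``all real roots'' and ``two conjugate pairs'' alternatives, with the $-,+,-$ sign pattern of $r$ at the critical points (respectively, the positivity of $d'$) serving as the algebraic signature of each.
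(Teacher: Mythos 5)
Your proposal is correct and follows essentially the same route as the paper: the discriminant trichotomy for $\Delta_4>0$, Rolle's theorem giving three critical points at which $r$ takes signs $-,+,-$ for the Turing case, and the product of two conjugate pairs giving $d'>0$ for the Turing--Hopf case. Your additional care in deducing ``all four roots real'' from a single real root under $\Delta_4>0$, and your remark on the converse via the intermediate value theorem, are minor refinements of the same argument.
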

\begin{proof}
    It is well-known that there are either four distinct real roots or four distinct complex roots when $\Delta_4>0$, two real and two complex roots when $\Delta_4<0$, and duplicate roots when $\Delta_4=0$. Assuming $\Delta_4>0$, all roots must be real for Turing instability to occur. By Rolle's theorem, there must be a real value between each pair of roots where $r(\lambda)$ has slope $0$ at that point, and these values must be alternating in sign because no roots are repeated. Thus, $r'(\lambda)=4\lambda^3-3a'\lambda^2+2b'\lambda-c'$ must have three real roots alternating in sign.

    Assuming $\Delta_4>0$, we must have four complex roots for Turing-Hopf instability to occur. Let these roots be $w+xi$, $w-xi$, $y+zi$, and $y-zi$. Thus, we have $d'=(w^2+x^2)(y^2+z^2)>0$.
\end{proof}

\section{Examples of Pattern Formation}\label{sec:examples}

In Section~\ref{sec:instability} and~\ref{sec:four-state}, we theoretically analyzed the conditions for which pattern formation occurs. We dedicate the current section to experimentally verifying that pattern formation can occur in superinfection and co-infection dynamics. A more detailed experimental discussion of pattern formation and its implications will be focused on in the companion paper \cite{SchaposnikYu2025}.

We refer to the lattice network where all nodes except the boundary nodes have degree $4$ as the LA4 network, and the lattice network where most nodes have degree $12$ as the LA12 network. We consider the superinfection dynamics in Equation~\ref{eq:superinfect_model} and the co-infection dynamics in Equation~\ref{eq:coinfect_model}. With the selected parameter configurations in Example~\ref{ex:1}, the MBRD-SI model produces the patterns in Figure~\ref{fig:superinfect_sample} on a lattice multiplex network where all layers are LA12. Similarly, the configurations in Example~\ref{ex:2} incorporated into the MBRD-CI model produces the patterns in Figure~\ref{fig:coinfect_sample} where the $S$ and $I$ layers are LA12, and the $J$ layer is LA4. 
From the growth dynamics and common spotted shapes of these patterns, we believe they arise from Turing instability.

\begin{example} \label{ex:1} (Superinfection model) 
\begin{equation}\label{eq:p1-settings}
\begin{aligned}
\mu &= 0.005,\quad &r&=0.1,\quad & A&=0.1,\quad & K&=1,\\
\beta_1 &=0.3, \quad & \beta_2&=0.15,\quad & \sigma &=3,\\
\gamma_1&=0.02,\quad &\gamma_2&=0.05,\quad & \alpha_1&=0.02,\quad & \alpha_2&=0.15,\\
d_{11}&=0.1,\quad & d_{12}&=-0.2,\quad & d_{13}&=-0.2,\\
d_{22}&=0.01,\quad & d_{33}&=4.8.
\end{aligned}
\end{equation}
\end{example}

\begin{example}\label{ex:2} (Co-infection model)
\begin{equation}\label{eq:p2-settings}
\begin{aligned}
\mu &= 0.005,\quad & r&=0.1,&\quad A&=0.1,\quad& K&=1,\\
\beta_1&=0.3,\quad & \beta_2&= 0.15,\\ \beta_{10}&=0.1,\quad & \beta_{02}&=0.1,\quad & \beta_{12}&=0.05,\\
\gamma_1&=0.02,\quad & \gamma_2&=0.05,\\ \alpha_1&=0.02,\quad & \alpha_2&=0.15,\quad & \alpha_{12}&=0.1,\\
d_{11}&=0.4,\quad & d_{12}&=-0.2,\quad & d_{13}&=-0.2,\\
d_{22}&=0.01,\quad & d_{33}&=4.8.
\end{aligned}
\end{equation}
\end{example}

\begin{figure}[htbp]
    \centering
    \includegraphics[width=0.42\textwidth]{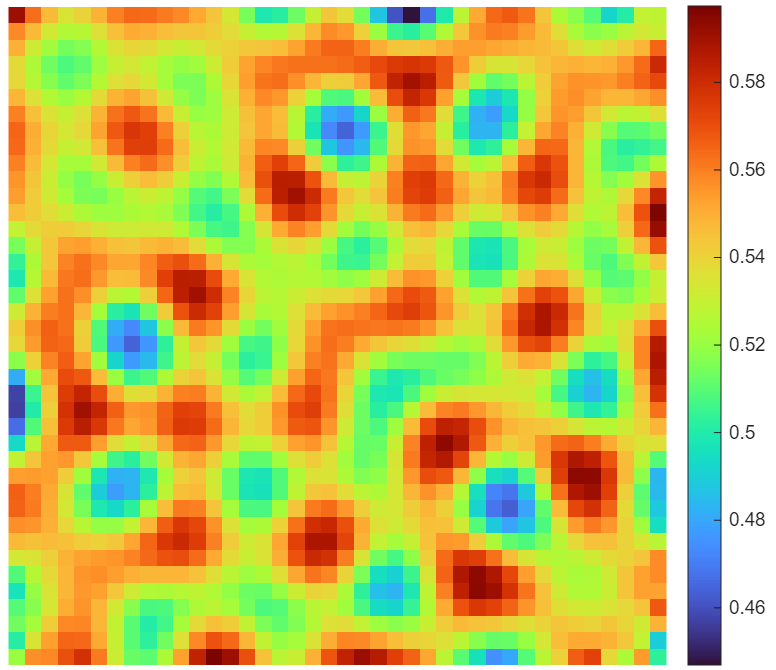}
    \caption{Pattern in superinfection dynamics, layer $I$, $t=1800$.}
    \label{fig:superinfect_sample}
\end{figure}

\begin{figure}[htbp]
    \centering
    \includegraphics[width=0.42\textwidth]{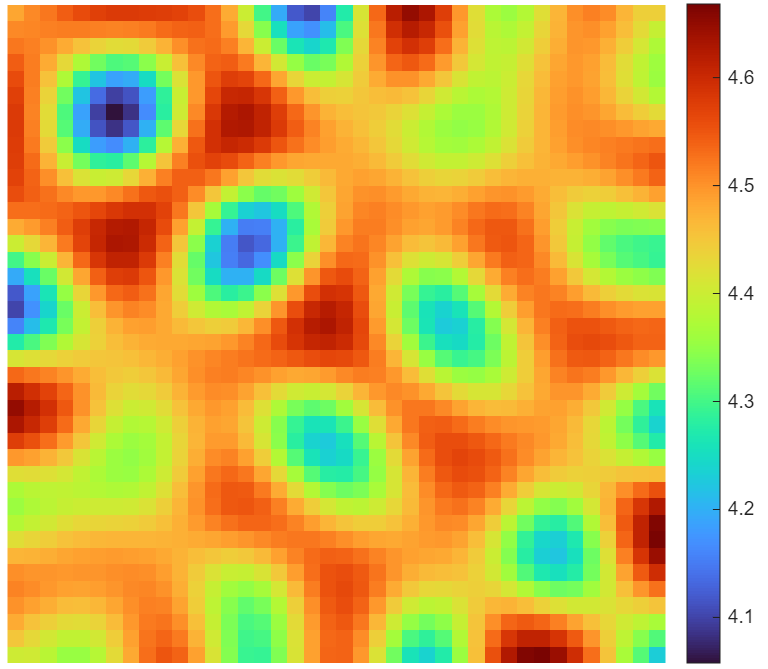}
    \caption{Pattern in co-infection dynamics, layer $I$, $t=550$.}
    \label{fig:coinfect_sample}
\end{figure}

\section{Applications}\label{sec:applications}

In this paper, we provide a framework upon which reaction-diffusion models can be created for other applications. In the following, we discuss potential applications of this framework.

\begin{itemize}
    \item \textbf{Information propagation:} The super-infection model proposed in this paper can be used to analyze the spread of conflicting or related rumors in the same network of societies. In applications of the co-infection model to social networks, individuals may be thought of as ``co-infected'' if they change their mind frequently and therefore spread both rumors.
    \item \textbf{Malware propagation:} The models in this paper can be modified to study computer viruses on networks. First, it is important to understand the dynamics between viruses and anti-viruses. Previous compartmental models have represented computers as susceptible, infected, or protected nodes~\cite{shukla2014modeling}. This could be extended into reaction-diffusion equations on a three-multiplex network, where the three layers represent the three states of computers, using the framework in this paper. Superinfection-like phenomena can occur when one particular malware is particularly dominant. Second, there is evidence that computer viruses, such as Vobfus and Beebone~\cite{microsoft}, can infect a host computer at the same time and even support one another's survival. These dynamics can be analyzed with co-infection models similar to the one presented in this paper. 
    \item \textbf{Urban planning:} Reaction-diffusion models on networks can describe how traffic congestion propagates from region to region, including how congestion in a city affects that of nearby suburbs or how freight transportation or school buses impact congestion at different times of the day. Models such as~\cite{bellocchi2020unraveling} can be modified to incorporate multiple transportation layers with road networks, commuter rail networks, or metro systems.
    \item \textbf{Election forecasting:} Compartmental epidemic models have been used to predict the 2012 and 2016 presidential elections~\cite{volkening2020forecasting}. These models can be extended to reaction-diffusion equations on networks to analyze the spatial dynamics between voting intentions of different smaller regions in the USA and other countries. An individual may be thought of as ``superinfected'' if they switch ideologies or are leaning towards one party but end up voting for another similar party that is more likely to win. Additionally, a voter can be thought of as ``co-infected" if they are moderate or believe in different aspects of two or more ideologies and are unsure of which of those parties they will vote for.
\end{itemize}

\section{Final Remarks}\label{sec:conclusion}

Over the past years, we have seen a rise in the use of reaction-diffusion dynamics to model not only epidemic spread, but also for rumor propagation and predator-prey dynamics~\cite{zhao2025navigating, ye2025pattern, song2023cross}.  In this paper, we have introduced two new deterministic frameworks: the {\bf Multiplex Bi-Virus Reaction-Diffusion models (MBRD)}. These include the  {\bf MBRD-SI} model for superinfection and the  {\bf MBRD-CI}  model for co-infection, both formulated on multiplex metapopulation networks.

Prior research has utilized stochastic processes to model superinfection and co-infection dynamics~\cite{gracy2025modeling, wu2013superinfection}. However, by integrating diffusion into our deterministic models, we capture an important characteristic of infection spread while offering computational simplicity. This makes the MBRD class of models well-suited for predicting epidemic``waves'' and large-scale pattern formation.

 To our knowledge, this is the first work to establish superinfection and co-infection reaction-diffusion epidemic models on multiplex networks. Moreover, we have derived conditions for pattern formation involving three or four morphogens, which had not previously been analyzed in a network setting.

Our MBRD-CI model, for example, could provide more accurate predictions of infections during the COVID-19 pandemic, where co-infection with influenza reached rates as high as 48\% \cite{tang2022sars}. Indeed, it can be applied to pairwise co-infections of influenza, COVID-19, and Respiratory Syncytial Virus (RSV), assuming that immunity for COVID-19 is short-lived, or co-infections of gonorrhea and chlamydia~\cite{tran2024impact}.  
On the other hand, the MBRD-SI model is well-suited for applications such as modeling HIV superinfection~\cite{redd2013frequency}, where recovery does not occur and $\gamma_1=\gamma_2=0$.

 This paper provides a foundation for which many extensions can be made. Future research building on the models introduced here could incorporate the following:
\begin{itemize}
    \item Future work can build on the current model by accounting for factors such as vaccinations, age-structuring, and cross-immunity~\cite{ram2021modified}.
    \item Extending the present model to a system with $3$, or in general $n$-pathogens, would be useful for modeling the interactions between COVID-19, influenza, and Respiratory Syncytial Virus (RSV) around the world between 2020 and 2023, among other scenarios. 
    \item In many cases, human movement between two communities may be particularly large or small, or may only be one-directional. Accounting for these differences through weighted and directed networks may produce more accurate models for predicting infectious spread.
    \item Similar models to the superinfection model introduced in this paper can be proposed for superinfection exclusion. This can be used to predict the spread of the West Nile virus and the flavivirus~\cite{goenaga2020superinfection}, among other types of infections.
    \item The current SIS model cannot be directly applied to vector-borne diseases. A vector-borne adaptation of the co-infection model in Equation~\ref{eq:coinfect_model} can be used to investigate malaria and helminth co-infections~\cite{mwangi2006malaria}, Zika and dengue co-infections~\cite{bonyah2019co}, and COVID-19 and dengue co-infections~\cite{verduyn2020co}. Vector-borne adaptations of the superinfection model proposed in Equation~\ref{eq:superinfect_model} can be used to model different strains of dengue viruses~\cite{eegunjobi2023modelling}, among others.
\end{itemize}

In this paper, we focus on introducing and theoretically analyzing new models for superinfection and co-infection. This work will be continued by an accompanying paper that uses our models to understand the impact of various factors on hotspot growth and the spread of infections across a human meta-population network.

\noindent {\bf Acknowledgements.}\\
The authors are thankful to the MIT PRIMES-USA program for their support and the opportunity to conduct this research together.
The research of LPS was partially supported by  NSF FRG Award DMS- 2152107 and an NSF CAREER Award DMS 1749013.  \\

\noindent {\bf Affiliations.}\\
(a) Poolesville High School, Poolesville,   USA.\\
(b)  University of Illinois at Chicago,  USA. \\
 
\bibliographystyle{unsrt}
\bibliography{PRIMES_2025}{}

\appendix

\end{document}